\newtheorem{lemma}{Lemma}
\newtheorem{theorem}{Theorem}
\newtheorem{corollary}{Corollary}
\newcommand{\N}{\ensuremath{\mathbb{N}}}
\newcommand{\C}{\ensuremath{\mathbb{C}}}
\newcommand{\proj}[1]{\ensuremath{\ketbra{#1}{#1}}}
\newcommand{\floor}[1]{\ensuremath{\lfloor #1 \rfloor}}
\newcommand{\Id}{{\rm 1\hspace{-0.9mm}l}}
\newcommand{\LL}{\mathcal{L}}
\newcommand{\RR}{\mathcal{R}}
\newcommand{\PP}{\mathcal{P}}
\newcommand{\CC}{\mathcal{C}}
\newcommand{\ie}{\emph{i.e.} }
\newcommand{\QQ}{\mathcal{Q}}
\newcommand{\TT}{\mathcal{T}}
\renewcommand{\SS}{\mathcal{S}}
\newcommand{\II}{\mathcal{I}}
\newcommand{\HH}{\mathcal{H}}
\newcommand{\XX}{\mathcal{X}}
\newcommand{\YY}{\mathcal{Y}}
\newcommand{\ZZ}{\mathcal{Z}}
\newcommand{\ketv}[1]{\ensuremath{\left|\left.{#1} \right\rangle \! 
\right\rangle}}
\newcommand{\brav}[1]{\ensuremath{\left\langle \! \left\langle {#1} 
\right.\right|}}
\newcommand{\projv}[1]{\ketv{{#1}} \!\! \brav{{#1}}}
\begin{document}
	
\title{Storage and retrieval of von Neumann
	measurements}
\author{Paulina Lewandowska}
\email{plewandowska@iitis.pl}
\affiliation{Institute of Theoretical and Applied Informatics, Polish Academy
	of Sciences, Ba{\l}tycka 5, 44-100 Gliwice, Poland}
\author{Ryszard Kukulski}
\affiliation{Institute of Theoretical and Applied Informatics, Polish Academy
	of Sciences, Ba{\l}tycka 5, 44-100 Gliwice, Poland}
\author{\L ukasz Pawela}
\affiliation{Institute of Theoretical and Applied Informatics, Polish Academy
	of Sciences, Ba{\l}tycka 5, 44-100 Gliwice, Poland}
\author{Zbigniew Pucha\l a}
\affiliation{Institute of Theoretical and Applied Informatics, Polish Academy
	of Sciences, Ba{\l}tycka 5, 44-100 Gliwice, Poland}

\begin{abstract}
	This work examines the problem of learning an unknown von 
	Neumann measurement of dimension $d$ from a finite number of copies. To obtain a faithful 
	approximation of the given measurement we are 
	allowed to use it $N$ times. Our main goal is to estimate the asymptotic 
	behavior of the maximum 
	value of the average fidelity function $F_d$ for a general $N \rightarrow 
	1$ learning scheme. We show that $F_d = 1 - 
	\Theta\left(\frac{1}{N^2}\right)$ for arbitrary but fixed dimension $d$. 
	In addition to that, we compared various learning 
	schemes for $d=2$. We observed that the learning scheme based on 
	deterministic port-based teleportation is asymptotically optimal but 
	performs poorly for low $N$. In particular, we discovered a  parallel 
	learning 
	scheme, which despite its lack of 
	asymptotic optimality, provides a high value of the fidelity for 
	low values of $N$ and uses only two-qubit entangled memory states.

\end{abstract}
\maketitle
\section{Introduction}
In the general approach of storage and retrieval (SAR) of quantum operations, 
we want 
to approximate a given, unknown operation, which we
were able to perform $N$ times experimentally. Such a scheme is called $N 
\rightarrow 1$ \emph{learning scheme.} 
This strategy usually consists of preparing some initial quantum state, applying the unknown 
operation $N$ times,
which allows us to store the unknown operation for later use, and finally, a 
retrieval operation that
applies an approximation of the black box on some arbitrary quantum state. 
Additionally, each application
of the operation contained within the black box can be followed by 
some arbitrary
processing operations. If that is the case, the optimal strategy should also 
contain their
description. The scheme is optimal when it achieves the highest possible 
fidelity of the
approximation~\cite{raginsky2001fidelity, belavkin2005operational}.

The obstacle standing in the way of achieving unit fidelity lies in the 
no-cloning 
theorem~\cite{wootters1982single} and even further, the no-programming 
theorem~\cite{nielsen1997programmable}. It states that a general processor, 
which performs a program based on some input 
state is not possible. There is no doubt that programmable devices would 
represent an instrumental 
piece of quantum technology. Hence, their approximate realizations are of 
common interest~\cite{buvzek1996quantum,
	hillery2002probabilistic, yang2020optimal, 
	gschwendtner2021programmability, peres2002unspeakable}. 
Nevertheless, one should keep in mind the difference 
between programming and learning. A learning protocol can be used to generate a 
program, but the program does not have to be generated by learning. Although,  
in some cases, the performances of programming and learning are equal 
\cite{yang2020optimal}.

The seminal work in this field was the paper by Bisio and 
Chiribella~\cite{bisio2010optimal}. It was devoted
to learning an unknown unitary transformation. Therein, the authors focused on 
storing the unitary
operation in a quantum memory while having limited resources. They 
proved that unitary
operations could be learned optimally in the parallel scheme, meaning there 
is no additional
processing after using the unknown unitary transformation. Hence, all 
the required uses of
the black box can be performed in parallel. They also provide an upper bound on 
the fidelity of
$N \rightarrow1$ learning scheme that is equal
$1-\Theta\left(\frac{1}{N^2}\right)$. A 
probabilistic version of
SAR (PSAR) problem was also considered in \cite{sedlak2019optimal, 
sedlak2020probabilistic}. There,
they showed the optimal success probability of $N\rightarrow 1 $ PSAR of 
unitary channels on
$d$-dimensional quantum systems is equal to $N/(N-1+d^2)$.

It is worth mentioning that the SAR of unitary operations is also closely 
related to port-based teleportation (PBT) \cite{ishizaka2008asymptotic, 
ishizaka2009quantum}. One may take the PBT protocol to define a learning scheme. In 
particular, the entanglement fidelity 
of the deterministic PBT protocol matches the fidelity of learning unitary 
operations~\cite{bisio2010optimal, christandl2021asymptotic}. Similarly, the 
probability of successful teleportation is equal to the performance of PSAR 
\cite{sedlak2019optimal}. 
Finally, the task of learning unknown operations 
fits into the paradigm of quantum machine learning, when both the 
data and the algorithms are quantum~\cite{sasaki2002quantum, sentis2012quantum, 
	dunjko2016quantum,
	monras2017inductive, alvarez2017supervised, amin2018quantum, 
	sentis2019unsupervised}.

Subsequent works build upon these results but focus on different classes of 
operations, for example,
the von Neumann measurements~\cite{bisio2011quantum}. In contrast to previous 
works, they showed
that, in general, the optimal algorithm for quantum measurement learning cannot 
be parallel and found
the optimal learning algorithm for arbitrary von Neumann measurements for the 
case $1 \rightarrow 1$
and $2\rightarrow 1$. Nevertheless, a general optimal scheme $N \rightarrow 1 $ 
of measurement
learning still remains an open problem, even for low-dimensional quantum 
systems. Hence, despite some partial results, the investigation of SAR for von 
Neumann measurements is still an open question.

In this work, we address the unsolved problem of learning an unknown von 
Neumann measurement
defined in \cite{bisio2011quantum}. We focus on $N \rightarrow 1$  learning  scheme of von Neumann measurements. We investigate a 
value of the average fidelity 
in the asymptotic regime. By using the deterministic PBT protocol we state a 
lower bound which behaves as $1 - \Theta\left(\frac{1}{N^2}\right)$. Moreover, 
we provide an upper bound for the average fidelity function, which matches the 
lower bound and hence provides a solution to the problem. 

Additionally, we compare different learning 
schemes for the qubit case. Although, the learning scheme based on PBT is 
asymptotically optimal, it can be outperformed for low values of $N$. To show this,  we introduce 
a scheme, which we call  \emph{pretty good learning scheme} (PGLS). 
This scheme is a particular case of a parallel learning scheme 
which uses only two-qubit entangled memory states. The fidelity function 
calculated for the pretty good learning
scheme is uniform over all qubit von Neumann measurements and behaves as 
$1-\Theta\left(\frac{1}{N}\right)$.  

This paper is organized as follows. In Section~\ref{sec:formulation} we 
formulate the problem of von
Neumann measurement learning. In Section~\ref{sec:notation} we introduce 
necessary mathematical
concepts. Our main result is then presented in Section~\ref{sec:setup}
(Theorem~\ref{theorem}). To prove this theorem, we first address the case of 
lower bound (Section~\ref{fidelitylower}), and subsequently,  upper bound 
(Section~\ref{fidelityupper}). In Section~\ref{sec:qubit} we compare the 
performance of different learning schemes for a qubit case. In particular, we 
introduce the pretty good learning scheme and present numerical results about the most efficient parallel and adaptive learning schemes. 
Finally,
Section~\ref{sec:conclusion} concludes the article with a summary of the main 
results. In the Appendix, we
provide technical details of proofs.

\section{Problem formulation}\label{sec:formulation}

This section presents the formulation of the problem of learning an 
unknown von Neumann
measurement. We provide an overview of a learning scheme in 
Fig.~\ref{fig:schema},  along with its
description in Subsection~\ref{sec:setup}.

\subsection{Mathematical framework}\label{sec:notation}

Let us introduce the following notation. Consider a $d$-dimensional complex 
Euclidean space $\C^d$ and
denote it by $\HH_d$. Let $\mathrm{M}(\HH_{d_1}, \HH_{d_2})$ be the set of all 
matrices of dimension
$d_1 \times d_2 $. As a shorthand we put $\mathrm{M}(\HH_d) \coloneqq 
\mathrm{M}(\HH_d, \HH_d)$. The
set of quantum states defined on space $\HH_d$, that is the set of positive 
semidefinite operators having unit trace,
will be denoted by $\Omega(\HH_d)$. We will also need a linear mapping 
transforming
$\mathrm{M}(\HH_{d_1}) $ into $\mathrm{M}(\HH_{d_2}) $ as $\mathcal{T}: 
\mathrm{M}(\HH_{d_1})
\mapsto \mathrm{M}(\HH_{d_2}).$ There exists a bijection between introduced 
linear mappings
$\mathcal{T}$ and set of matrices $\mathrm{M}(\HH_{d_1  d_2}) $, known as the 
Choi-Jamio{\l}kowski
isomorphism~\cite{choi1975completely, jamiolkowski1972linear}. Its explicit 
form is $\text{T} =
\sum_{i,j=0}^{d_1-1} \TT(\ketbra{i}{j}) \otimes \ketbra{i}{j} $. We will denote 
linear mappings with calligraphic font $\LL, \SS, \mathcal{T}$ etc., whereas 
the 
corresponding Choi-Jamio{\l}kowski matrices as plain symbols: $L, S, T$ etc. 
Moreover, we introduce the vectorization operation of a matrix $X \in 
\mathrm{M}(\HH_{d_1}, 
\HH_{d_2})$, defined by 
$\ketv{X} 
\coloneqq
\sum_{i=0}^{d_2-1} \left(X\ket{i}\right) \otimes \ket{i}$.

A general quantum measurement (POVM) $\mathcal{Q}$ can be viewed as a set of 
positive semidefinite
operators $ \QQ = \{ Q_i \}_i$ such that $\sum_i Q_i = \Id$. These operators 
are usually called 
effects. The von Neumman measurements, $\PP_U$,
are a special subclass of measurements whose all effects are rank-one 
projections given by $\PP_U =
\{P_{U,i}\}_{i=0}^{d-1} = \{U\ketbra{i}{i}U^\dagger\}_{i=0}^{d-1}$ for some 
unitary matrix $U \in
\mathrm{M}(\HH_d)$.

Quantum channels are completely positive and trace preserving (CPTP) linear 
maps. Generally, $\CC$
is a quantum channel which maps $\mathrm{M}(\HH^{(in)})$ to 
$\mathrm{M}(\HH^{(out)})$ if its 
Choi-Jamio{\l}kowski operator $C$ is a positive semidefinite and 
$\tr_{\HH^{(out)}} (C)
= \Id$, where $\tr_{\HH^{(out)}}$ denotes a partial trace over the output 
system $\HH^{(out)}$.
Given a von Neumann measurement $\PP_U$, it can be seen as a 
measure-and-prepare quantum channel
$\PP_U(\rho) = \sum_{i} \tr\left( P_{U,i} \rho \right)\ketbra{i}{i}$, $\rho \in 
\Omega(\HH_d)$. The
Choi matrix of $\PP_U$ is $ P_U = \sum_{i} \ketbra{i}{i} \otimes 
\overline{P_{U,i}}, $ which will be
utilized throughout this work. Finally, we will use the notation $\Phi_U$ to 
indicate
unitary channel given by $ \Phi_{U}(X) = U X U^\dagger $ and the shortcut 
$\II_d \coloneqq
\Phi_{\Id_d}$ for the identity channel.

\subsection{Learning setup}\label{sec:setup}

\begin{figure}[!ht]
\centering
	\includegraphics[scale=0.9]{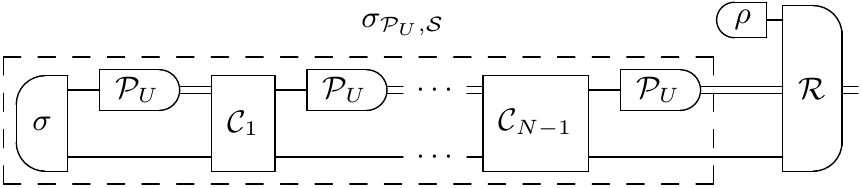}
	\caption{Schematic representations of the setup for learning of von Neumann
		measurements $\PP_U$ in the $N \rightarrow 1$ scheme.}\label{fig:schema}
\end{figure}
Imagine we are given a black box with the promise that it contains some von 
Neumann measurement,
$\PP_U$, parameterized by a unitary matrix $U$. The exact value of $U$ 
is unknown to us. We
are allowed to use the black box $N$ times. Our goal is to prepare some initial 
memory state
$\sigma$, some intermediate processing channels $\CC_1, \ldots, \CC_{N-1}$ and 
a  measurement $\RR$
such that we are able to approximate $\PP_U$ on an arbitrary state $\rho$. This 
approximation will
be denoted throughout this work as $\QQ_U$. We would like to point out that, 
generally, $\QQ_U$ will
\emph{not} be a von Neumann measurement.

The initial memory state $\sigma$ and entire sequence of processing channels 
$\{ \CC_i\}$ can be
viewed as \emph{storing} the unknown operation and will be denoted as $\SS$ 
whereas the measurement
$\RR$ we will call as \emph{retrieval}. During the storing stage, we apply 
$\SS$ on $N$ copies of
$\PP_U$. As a result, the initial memory state $\sigma$ is transferred to the 
memory state 
$\sigma_{\PP_U,\SS}$. After that, we measure an arbitrary quantum state $\rho$ 
and the memory state 
$\sigma_{\PP_U,\SS}$ by using $\RR$. Equivalently, we can say that during 
retrieval stage, we apply 
the measurement $\QQ_U$ on the state $\rho$. The entire learning scheme will be 
denoted by $\LL$ 
and considered as a triple $\LL = \left(\sigma, \{ \CC_i \}_{i=1}^{N-1}, \RR 
\right)$. We emphasize 
that the procedure allows us to use as much quantum memory as necessary.

As a measure of quality of approximating a von Neumann measurement $\PP_U=\{ 
P_{U,i}\}_i$ with a POVM $\QQ_U = \{ Q_{U,i}
\}_i$ we choose the fidelity function~\cite{raginsky2001fidelity}, which is 
defined as follows
\begin{equation}\label{fidelity}
	\mathcal{F}_d(\PP_U, \QQ_U) \coloneqq \frac{1}{d} \sum_i
	\tr(P_{U, i} Q_{U, i}),
\end{equation}
where $d$ is the dimension of the measured system. 
Note that in the case when $\PP_{U}$ is a von Neumann measurement we obtain the 
value of fidelity 
function $\mathcal{F}_d$ belongs to the interval $[0,1]$ and equals one if 
and 
only if $P_{U,i} = Q_{U,i}$ 
for all $i$. As there is no prior information about $\PP_U$
provided, we assume that $U$ is sampled from a distribution pertaining to the 
Haar measure.
Therefore, considering a von Neumann measurement $\PP_U$ and its approximation 
$\QQ_U$ we introduce
the \emph{average fidelity function}~\cite{bisio2016quantum} with respect to 
Haar measure as
\begin{equation}\label{average-fidelity}
	\mathcal{F}_d^{\text{avg}}(\LL) \coloneqq \int_U dU \mathcal{F}_d(\PP_U, 
	\QQ_U).
\end{equation}

Our main goal is to maximize $\mathcal{F}_d^\text{avg}$ over all possible 
learning schemes $\LL
= \left( \sigma, \{ \CC_i \}_{i=1}^{N-1}, \RR \right)$. We introduce the 
notation of the
maximum value of the average fidelity function
\begin{equation}\label{eq:fidelity}
	F_d \coloneqq \max_\LL \mathcal{F}_d^\text{avg}(\LL).
\end{equation}

In this work, we analyze the asymptotic behavior of $F_d$ with $N \to \infty$. 
Our main result can be summarized as the following theorem.

\begin{theorem}\label{theorem}
	Let $F_d$ be the maximum value of the average fidelity function, defined in
	Eq.~\eqref{eq:fidelity} for $N \rightarrow 1$ learning scheme of 
	von Neumann measurements. Then, for arbitrary but fixed dimension $d$ we 
	obtain
	\begin{equation}
		F_d = 1 - \Theta\left(\frac{1}{N^2}\right).
	\end{equation}
\end{theorem}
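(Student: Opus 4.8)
The plan is to prove the two matching bounds $F_d \geq 1 - O(1/N^2)$ and $F_d \leq 1 - \Omega(1/N^2)$ separately, since together they give $F_d = 1 - \Theta(1/N^2)$.

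For the lower bound, the idea is to exhibit an explicit learning scheme whose average fidelity is $1 - O(1/N^2)$. The natural candidate, suggested by the analogous result for unitary channels in~\cite{bisio2010optimal}, is to build the scheme from deterministic port-based teleportation (PBT). First I would observe that a von Neumann measurement $\PP_U$ factors as $\PP_{\Id} \circ \Phi_{U^\dagger}$; equivalently, learning $\PP_U$ reduces to first learning (storing and retrieving) the unitary channel $\Phi_{U^\dagger}$ and then applying the fixed computational-basis measurement $\PP_{\Id}$. Concretely, one uses the $N$ black-box calls in parallel on half of $N$ maximally entangled pairs to store $\Phi_U^{\otimes N}$ in the standard PBT resource state; the retrieval step teleports the input through one of the ports, thereby applying an approximation of $\Phi_{U}$ (up to the PBT Pauli/Weyl correction, which in the deterministic protocol is absorbed), and finally one performs $\PP_{\Id}$. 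Since the entanglement fidelity of deterministic PBT with $N$ ports in dimension $d$ is known to be $1 - O(d^2/N^2)$, and the fidelity function $\mathcal{F}_d$ is dominated by (a simple function of) this entanglement fidelity, the average fidelity of this scheme is $1 - O(1/N^2)$ for fixed $d$. The main care here is to check that the measurement-fidelity figure of merit $\mathcal{F}_d$ in Eq.~\eqref{fidelity}, after averaging over Haar-random $U$, is lower-bounded by an affine function of the channel fidelity of the PBT-based approximation of $\Phi_U$, so that the known PBT performance transfers directly.

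For the upper bound, I would argue that \emph{any} learning scheme for $\PP_U$ can be used to build a learning scheme for the unitary $\Phi_U$ with at most a constant loss, so that the known upper bound $1 - \Omega(1/N^2)$ for learning unitaries~\cite{bisio2010optimal} forces the same on $F_d$. The key point is that measuring in a known basis after an unknown unitary retains enough information: if $\QQ_U$ is an $\epsilon$-good approximation of $\PP_U = \PP_{\Id}\circ\Phi_{U^\dagger}$ in the sense that $\mathcal{F}_d^{\text{avg}} = 1-\epsilon$, then feeding appropriate probe states and post-processing the classical outcomes reconstructs an approximation of $\Phi_U$ whose average fidelity is $1 - O(\epsilon)$ (the $d$-dependent constants are harmless since $d$ is fixed). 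Combined with the Bisio--Chiribella bound this gives $1 - \epsilon \leq 1 - \Omega(1/N^2)$, i.e.\ $\epsilon = \Omega(1/N^2)$.

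\textbf{Main obstacle.} I expect the genuinely delicate step to be the upper bound, specifically making rigorous the reduction ``good approximate measurement $\Rightarrow$ good approximate unitary.'' Unlike a unitary channel, a von Neumann measurement destroys coherence, so one cannot literally invert $\PP_{\Id}$; the reduction must instead exploit that the \emph{whole family} $\{\PP_U\}_U$ is being learned simultaneously, and that good average performance over Haar-random $U$ pins down the retrieval POVM's structure via a symmetrization/representation-theoretic argument (averaging over the irreducible components of $U^{\otimes N}\otimes \bar U$). Turning the fidelity bound into a statement about an induced unitary-learning scheme — rather than just a single fixed $U$ — and controlling the constants in that transfer is where the technical work concentrates; the lower bound, by contrast, is mostly a matter of correctly invoking the quantitative PBT fidelity estimate.
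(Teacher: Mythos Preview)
Your lower bound is the paper's: build the scheme from deterministic port-based teleportation, so that $\QQ_U = \PP_U \circ \TT_0$ and $\mathcal{F}_d^{\text{avg}}(\LL) = \tfrac{1}{d+1} + \tfrac{d}{d+1}F_* = 1 - \Theta(1/N^2)$ (Appendix~A). One cosmetic slip: you only have black-box access to $\PP_U$, not to $\Phi_U$, so you are not literally ``storing $\Phi_U^{\otimes N}$''; rather you apply $\PP_U^{\otimes N}$ to the ports and let the PBT measurement select which classical outcome to return. But this is the same construction.

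Your upper-bound strategy has a genuine gap, and the paper proceeds quite differently. The reduction ``good $\QQ_U$ $\Rightarrow$ good approximation of $\Phi_U$'' fails already at $\epsilon = 0$: the perfect scheme $\QQ_U = \PP_U$ outputs one classical label and no coherence, so no post-processing of it can approximate $\Phi_U$ on an unknown input with channel fidelity better than $O(1/d)$; the loss in your reduction is $\Omega(1)$, not $O(\epsilon)$, and the Bisio--Chiribella unitary bound then says nothing about $F_d$. The natural inequality actually runs the other way (unitary access dominates measurement access, since $\PP_U$ is $\Phi_{U^\dagger}$ followed by dephasing), and the paper uses exactly that easy direction: for $d=2$ it shows $F_2 \le \widetilde{F_2}$, where $\widetilde{F_2}$ is the optimal fidelity of approximating $\PP_U$ given $N$ parallel calls to the \emph{unitary} $\Phi_{\bar U}$ (Appendix~B.1). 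The substantive work is then a direct bound $\widetilde{F_2} \le 1 - \Theta(1/N^2)$: symmetrize the retrieval over $\{U \otimes \bar U^{\otimes N}\}$, reduce to a block operator-norm $\|H_{R,k}\|_\infty$ on permutation-module pieces, and prove an explicit operator inequality (Lemmas~3--4). The Bisio--Chiribella unitary-learning bound is never invoked. General $d$ is then reduced to $d=2$ by embedding: from an optimal $d$-dimensional scheme one builds a qubit scheme with $\mathcal{F}_2^{\text{avg}} \ge dF_d - (d-1)$ (Appendix~C), forcing $F_d \le 1 - \Theta(1/N^2)$. So the symmetrization/representation-theory step you anticipated is present, but it is used to control the auxiliary quantity $\widetilde{F_2}$, not to manufacture a unitary-learner out of a measurement-learner.
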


\section{Fidelity bounds}~\label{sec:fidelity-bounds}

This section provides a sketch of the proof of Theorem~\ref{theorem}, along with a general intuition behind our result. 
The full proofs are 
postponed to the Appendix~\ref{app:lower-bound},~\ref{app:upper-bound-2} and~\ref{app:upper} due to their technical nature.

\subsection{Lower bound}\label{fidelitylower}

The proof of the lower bound for $F_d$ is constructive. We 
will construct the learning scheme $\LL$ of von Neumann measurements, which 
achieves 
the scaling $ \mathcal{F}_d^{\text{avg}}(\LL) = 1 - 
\Theta\left(\frac{1}{N^2}\right)$.

\begin{figure}[h!]
	\centering\includegraphics{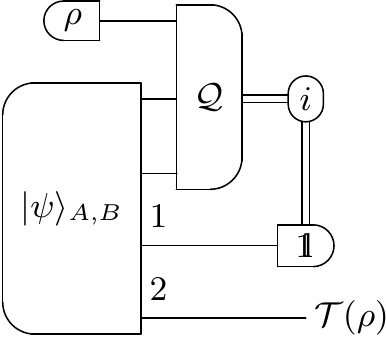}
	\caption{ Schematic representation of DPBT  for $N=2$. In this case, the 
	output label $i=2$ of the 
	measurement $\QQ$ determines the partial trace of the first system of the 
	remaining quantum state. \label{fig-dpbt}}
\end{figure}

The construction is based on deterministic port-based teleportation (DPBT) 
\cite{ishizaka2008asymptotic, ishizaka2009quantum, studzinski2017port, 
mozrzymas2018optimal, christandl2021asymptotic}. In this scheme (see 
Fig.~\ref{fig-dpbt}), two 
individuals -- Alice and Bob, share an entangled state $\proj{\psi}_{A,B} \in 
\Omega(\HH_{d^N}^{(A)} \otimes \HH_{d^N}^{(B)})$. Bob perceives his compound 
system $\HH_{d^N}^{(B)}$ as a tensor product of $N$ systems (ports) of the form 
$\HH_{d^N}^{(B)} = \HH_d^{(1)}\otimes\cdots\otimes\HH_d^{(N)}$.
 Their goal is to teleport an 
unknown state $\rho \in \HH_d^{(in)}$ from Alice to Bob in a way that this 
state appears in one 
of Bob's ports. 
To achieve this, Alice performs appropriate measurement $\QQ = \{Q_i\}_{i=1}^N$ 
on $\rho \otimes \proj{\psi}_{A}$, receives one of the labels $\{ 
1,\ldots, N \}$ and communicates this label to Bob. By using the label $i$, Bob 
chooses $i$-th system of his state $\proj{\psi}_{B}$ as the one which contains 
the state $\rho$. The output of this procedure can be 
written as
\begin{equation}
	\sum_{i=1}^N \tr_{\HH_d^{(in)} \otimes  \HH_{d^N}^{(A)} \otimes 
	\HH_{d^{N-1}}^{(\bar B_i)}} \left( (Q_i \otimes \Id_{d^N})(\rho \otimes 
	\proj{\psi}_{A,B}) \right),
\end{equation}
where $\HH_{d^{N-1}}^{(\bar B_i)} \coloneqq \HH_d^{(1)} \otimes \cdots \otimes 
\HH_d^{(i-1)} 
\otimes \HH_d^{(i+1)} \otimes \cdots \otimes \HH_d^{(N)} $. In short, the 
output of this procedure will be 
denoted as $\TT(\rho)$, where $\TT$ is a channel describing DPBT, depending on 
the choice of $\proj{\psi}_{A,B}$ and $\QQ$.
It is known~\cite{christandl2021asymptotic}, that the best teleportation 
procedure $\TT_0$ approximates $\II_d$ with the entanglement fidelity
\begin{equation}
	F_* \coloneqq \frac{1}{d^2}\tr \left(T_0 \projv{\Id_d}\right) = 1 - 
	\Theta\left(\frac{1}{N^2}\right).
\end{equation}

\begin{figure}[h!]
	\centering\includegraphics{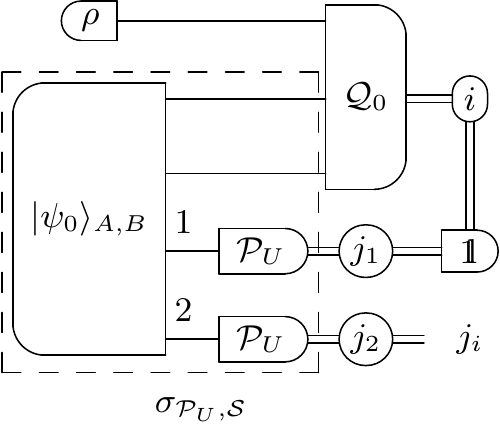}
	\caption{Schematic representation of the learning
		scheme for $N = 2$ based on DPBT. In this case, the label $i=2$ of 
		the measurement $\QQ_0$ indicates that the output of the learning 
		procedure equals $j_i = j_2$. \label{fig-dpbt-learning}}
\end{figure}
We can use DPBT to construct a learning scheme (see 
Fig.~\ref{fig-dpbt-learning}). Let $\proj{\psi_0}_{A,B}$ and $\QQ_0$ realize 
the optimal teleportation strategy $\TT_0$. We take $\sigma = 
\proj{\psi_0}_{A,B}$ 
as an initial memory state and consider a parallel 
learning scheme~\cite{bisio2010optimal} with $N$ copies of the von Neumann 
measurement $\PP_{U}$. The result of the storage is a memory state 
$\sigma_{\PP_U, S} 
= \left( \II_{d^N} \otimes \PP_U^{\otimes N} \right) (\sigma)$, which consists 
of the remaining quantum state $\tau$, and a tuple of measurements' results 
$(j_1,\ldots,j_N) \in \{0,\ldots,d-1\}^N$. The retrieval 
$\RR$ is a composition of a measurement $\QQ_0$ and classical 
postprocessing. In details, first, we 
apply $\QQ_0$ on $\rho \otimes \tau$ to obtain the label $i \in 
\{1,\ldots,N\}$. Second, we return the result $j_i$ as the output of $\RR$.
This procedure determines a learning scheme $\LL$ which achieves the average fidelity
\begin{equation}
\mathcal{F}_d^{\text{avg}}(\LL) = 1 - 
\Theta\left(\frac{1}{N^2}\right).
\end{equation}
We postponed the details of the proof to Appendix~\ref{app:lower-bound}.

\subsection{Upper bound}\label{fidelityupper}

To show the upper bound for $F_d$, we will construct a different 
learning scheme based on 
the learning of unitary maps. It will provide the desired inequality, at first 
for $d=2$, then for arbitrary $d$. 

\begin{lemma}\label{lem-upper-main} 
	For $d=2$ the maximum value of the average fidelity 
function defined in
	Eq.~\eqref{eq:fidelity} is upper bounded by
	\begin{equation}
		F_2 \le 1 - \Theta\left(\frac{1}{N^2}\right).
	\end{equation}
\end{lemma}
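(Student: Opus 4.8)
The plan is to reduce the problem of learning a qubit von Neumann measurement to the problem of learning a qubit unitary channel, for which the $1-\Theta(1/N^2)$ upper bound of Bisio--Chiribella~\cite{bisio2010optimal} is already known. The key observation is that for $d=2$ a von Neumann measurement $\PP_U$ carries essentially the same information as the unitary $U$ up to the residual freedom that leaves $\PP_U$ invariant: namely, right multiplication of $U$ by a diagonal unitary $\operatorname{diag}(e^{i\varphi_0},e^{i\varphi_1})$ and permutation of the two outcomes. In other words, $\PP_U=\PP_V$ if and only if $V=U D \Pi$ for a diagonal phase $D$ and a permutation $\Pi$. Thus knowing $\PP_U$ is equivalent to knowing the coset $U\cdot T$ in $\mathrm{SU}(2)/T$ (with $T$ the maximal torus), i.e.\ a point on the Bloch sphere together with a $\mathbb{Z}_2$ label. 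The strategy is: given an optimal learning scheme $\LL$ for $\PP_U$ achieving $F_2$, build from it a learning scheme for the unitary channel $\Phi_U$ (or for the relevant coset data) whose performance is controlled from below by $F_2$; then invoke the unitary upper bound to conclude $F_2\le 1-\Theta(1/N^2)$.

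Concretely, I would proceed in the following steps. First, fix notation: parametrize the Haar-random $U\in\mathrm{U}(2)$ and write the approximating POVM $\QQ_U=\{Q_{U,0},Q_{U,1}\}$ produced by $\LL$; by a standard twirling/symmetrization argument one may assume $\LL$ is covariant under the torus action $U\mapsto UD$ and under outcome permutation without decreasing $\mathcal{F}_2^{\mathrm{avg}}$, since averaging the scheme over these symmetries only averages the fidelity, which is concave/affine in the scheme. Second, show that a POVM $\{Q_0,Q_1\}$ with $\tfrac12\sum_i\tr(P_{U,i}Q_i)$ close to $1$ can be post-processed into a good approximation of the unitary $\Phi_U$: intuitively, if $Q_i\approx U\proj{i}U^\dagger$, then $U$ is pinned down up to the torus, and one can append a fixed state preparation conditioned on the outcome (or more carefully, use the fact that the covariant structure forces $Q_i=U\,\xi_i\,U^\dagger$ for some fixed operators $\xi_i$) to manufacture a retrieval channel that outputs a state close to $U\proj{k}U^\dagger$ for a requested input $\proj{k}$. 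Third, run the resulting scheme as a learning scheme for the unitary channel restricted to a suitable test ensemble (e.g.\ the eigenbasis inputs, which already see the full torus-coset information), express its average channel fidelity as an affine function of $\mathcal{F}_2^{\mathrm{avg}}(\LL)$ plus controlled error terms, and deduce that if $F_2\ge 1-o(1/N^2)$ then the constructed unitary-learning scheme would beat the Bisio--Chiribella bound $1-\Theta(1/N^2)$, a contradiction.

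The main obstacle I anticipate is the second step: turning an outcome-based approximation of $\PP_U$ into a genuine channel approximation of $\Phi_U$ without losing more than a constant factor in the infidelity, and in particular handling the torus ambiguity cleanly. A von Neumann measurement genuinely discards the relative phase between $\ket{0}$ and $\ket{1}$, so one cannot recover $\Phi_U$ on an arbitrary input from $\QQ_U$ alone; the fix is to notice that the unitary learning upper bound itself only needs to be applied to the quotient $\mathrm{SU}(2)/T$, or equivalently to observe that the optimal unitary learning scheme already factors through this quotient when the figure of merit is the Jamio\l{}kowski/entanglement fidelity averaged with the torus-twirl — so the comparison should be made at the level of the twirled Choi operators rather than naively. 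A careful bookkeeping of the dimensions of the relevant irreducible representations (the symmetric/antisymmetric subspaces of $(\C^2)^{\otimes N}$, on which both the PBT-type bound and the unitary bound are computed) is what makes the $1/N^2$ rate come out; I would lean on the representation-theoretic estimates already used in~\cite{bisio2010optimal,christandl2021asymptotic} rather than re-deriving them. Once $d=2$ is settled, the general-$d$ upper bound (Appendix~\ref{app:upper}) presumably bootstraps from this by embedding qubit blocks or by a direct generalization of the same covariance-plus-comparison argument.
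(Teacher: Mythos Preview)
Your reduction to the Bisio--Chiribella unitary bound has a directional gap that is not a mere technicality. The inequality you need is $F_2 \le (\text{unitary learning fidelity})$, but the natural relation runs the other way: since $\PP_U=\Delta\circ\Phi_{U^\dagger}$, approximating $\PP_U$ is \emph{easier} than approximating $\Phi_U$ from the same resource --- postcomposing any unitary-learning retrieval with $\Delta$ already gives a measurement-learning retrieval of at least the same fidelity, so one only gets $(\text{unitary fidelity})\le\widetilde{F_2}$, not the reverse. Your proposed fix, ``apply the unitary upper bound to the quotient $\mathrm{SU}(2)/T$'', is circular: the coset $U\cdot T$ is exactly the parameter of the von Neumann measurement $\PP_U$, so an upper bound for learning that coset \emph{is} the bound $F_2\le 1-\Theta(1/N^2)$ you are trying to prove. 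Restricting the test ensemble to eigenbasis inputs does not rescue this either, because the Bisio--Chiribella $1/N^2$ rate is proved for entanglement fidelity, and restricting the figure of merit to diagonal inputs again lands you on the quotient problem.

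The paper shares your first move --- replacing the resource $\PP_U$ by the stronger resource $\Phi_{\bar U}$ via $\PP_U=\Delta\circ\Phi_{U^\dagger}$ and a covariance twirl, obtaining $F_2\le\widetilde{F_2}$ (Lemma~\ref{lemma-unitary-transformation}) --- but it does \emph{not} then invoke the unitary bound as a black box. Instead it bounds $\widetilde{F_2}$ directly: after Schur--Weyl commutant arguments reduce the problem to $\widetilde{F_2}=\tfrac12+\tfrac12\max_k\max_R\|H_{R,k}\|_\infty$ (Lemma~\ref{lemma-simply}), the actual $1/N^2$ rate is extracted by a bespoke operator inequality (Lemma~\ref{lemma-k-2k}), built from hand-crafted matrices $I_\pm,I_\oplus,I_\ominus$ that sandwich $H_{R,k}$ and yield $\|H_{R,k}\|_\infty\le (x^2+2k+2k^2)/(x^2+2x)$, optimized at $x\approx 2k^2$. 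That combinatorial step is the heart of the proof and is not a corollary of~\cite{bisio2010optimal}.
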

The complete proof of Lemma~\ref{lem-upper-main} is shown in 
Appendix~\ref{app:upper-bound-2}.  As in 
the previous section, here we will only sketch the key steps.

\begin{figure}[!h]
	\centering\includegraphics{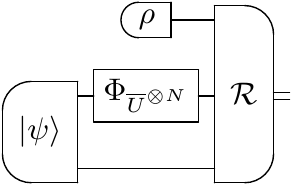}
	\caption{ Schematic representation of the setup, which we use to 
	calculate the upper 
	bound for $F_2$. In this 
		scenario, we are given $N$ copies of a unitary channel $\Phi_{\bar U}$ in 
		parallel. Our objective 
		is to approximate the von Neumann measurement $\PP_U$.	
		\label{upper-bound-3}}
\end{figure}
Let us consider a new learning scheme presented in Fig.~\ref{upper-bound-3} for 
$d=2$. In 
this scheme, we are 
given $N$ copies of unitary channel $\Phi_{\bar U}$, which we can use in 
parallel. We want to 
approximate the measurement $\PP_U$, but using the black box with the unitary 
channel 
$\Phi_{\bar U}$ inside. We will choose the appropriate initial memory state 
$\ket{\psi}$ and retrieval binary measurement $\RR = \{R_0, R_1\}$. We use the 
same measures of 
quality as before, namely 
$\mathcal{F}_2$ 
defined in Eq.~\eqref{fidelity} and $\mathcal{F}_2^{\text{avg}}$ defined in 
Eq.~\eqref{average-fidelity}. The goal is to maximize the value of the 
average fidelity 
function, which in this case, we will denote as $\widetilde{F_2}$. In the 
Appendix~\ref{upper:simplification-of-problem} we derived the formula for 
$\widetilde{F_2}$, which is given by
{\fontsize{9.5pt}{0}
\begin{equation}\label{eq-upper-main}
\begin{split}
	&\widetilde{F_2}=\\ & \max_{\RR, \proj{\psi}} \int_U dU  \sum_{i=0}^1
	\frac{\tr \left[ R_i \left(P_{U,i} \otimes \left(\Phi_{{\overline
				U}^{\otimes N}}
		\otimes \II \right)( \proj{\psi}) \right) \right]}{2}.
\end{split}
\end{equation}}
Calculating the value of $\widetilde{F_2}$ is the crux of the proof, because we 
managed to 
show that $F_2 \le \widetilde{F_2}$
(see Lemma~\ref{lemma-unitary-transformation} in 
Appendix~\ref{app:upper-bound-2}). We derived the thesis of 
Lemma~\ref{lem-upper-main}
by achieving the inequality $ \widetilde{F_2} \le 1 - 
\Theta\left(\frac{1}{N^2}\right).$

The proof of the upper bound for arbitrary $d$ relies on the qubit case. Let us 
take the optimal learning scheme $\LL$ such that it achieves 
$\mathcal{F}_d^{\text{avg}}(\LL) = F_d$ and satisfies 
the following commutation relation
\begin{equation}
	[L, \Id_d \otimes U \otimes (\Id_d \otimes \bar{U})]^{\otimes N}]=0
\end{equation}
for any unitary matrix $U \in \mathrm{M}(\HH_d)$. We can use $\LL$ to construct 
a  learning scheme $\LL'$ of qubit von Neumann measurements, 
such that it holds $\mathcal{F}_2^{\text{avg}}(\LL') \ge d F_d - (d-1)$. It 
directly implies that $F_d \le 1 - \Theta\left(\frac{1}{N^2}\right).$ We 
postponed the technical details to Appendix~\ref{app:upper}.

\begin{corollary}
	There  is no perfect learning scheme for von Neumann measurements, \ie 
	for any $N \in \N$ the value of $F_d < 1$.
\end{corollary}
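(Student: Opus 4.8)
The plan is to derive the strict inequality $F_d < 1$ directly from Theorem~\ref{theorem}. The theorem asserts that $F_d = 1 - \Theta\!\left(\tfrac{1}{N^2}\right)$, which by the definition of $\Theta$ means there exist constants $c_1, c_2 > 0$ and an integer $N_0$ such that, for all $N \ge N_0$,
\begin{equation}\label{eq:cor-theta}
	\frac{c_1}{N^2} \le 1 - F_d \le \frac{c_2}{N^2}.
\end{equation}
In particular the lower estimate $1 - F_d \ge c_1/N^2 > 0$ shows $F_d < 1$ for every $N \ge N_0$. So the only work left is to handle the finitely many small values $N < N_0$, and to do so \emph{without} appealing to the asymptotic statement.

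For those small $N$ I would argue from first principles. First observe that since the integrand $\mathcal{F}_d(\PP_U,\QQ_U)$ in Eq.~\eqref{average-fidelity} is continuous in $U$ and is bounded above by $1$ (it equals $1$ only when $Q_{U,i}=P_{U,i}$ for all $i$), and since the maximum over learning schemes in Eq.~\eqref{eq:fidelity} is attained on a compact set (the Choi matrices of $\sigma$, the $\CC_i$ and $\RR$ range over compact convex sets of operators with trace constraints), it follows that $\mathcal{F}_d^{\text{avg}}(\LL) = 1$ for the optimal $\LL$ would force $\mathcal{F}_d(\PP_U,\QQ_U) = 1$ for Haar-almost-every $U$, hence by continuity for \emph{all} $U$, hence $\QQ_U = \PP_U$ identically. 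Thus $F_d = 1$ if and only if there is a learning scheme that perfectly reproduces $\PP_U$ for every $U$ from $N$ uses.

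The main obstacle is therefore to rule out such a perfect scheme for small $N$. Here I would invoke the no-programming theorem~\cite{nielsen1997programmable}: a fixed retrieval measurement $\RR$ acting on $\rho$ together with the stored memory state $\sigma_{\PP_U,\SS}$ amounts to a programmable measurement device whose program is $\sigma_{\PP_U,\SS}$. Since distinct unitaries $U$ (modulo the residual phase and permutation freedom that leaves $\PP_U$ unchanged) give rise to a continuum of inequivalent von Neumann measurements $\PP_U$, a perfect scheme would program a continuum of distinct measurements, which is impossible with a finite-dimensional memory prepared by finitely many ($N < \infty$) uses of the black box — this is exactly the obstruction already flagged in the introduction via the no-cloning and no-programming theorems. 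Concretely, one can specialize to two unitaries $U_1, U_2$ with $\PP_{U_1} \ne \PP_{U_2}$: if the single fixed scheme $\LL$ output exactly $\PP_{U_1}$ on input state $\rho$ when the box contains $\PP_{U_1}$ and exactly $\PP_{U_2}$ when it contains $\PP_{U_2}$, then running the storage stage would have to produce perfectly distinguishable memory states from $N$ uses, yet $N$ uses of a measurement channel cannot perfectly encode which $\PP_{U}$ was used once one recalls that $\PP_{U}$ is not a unitary channel and the argument of~\cite{bisio2011quantum} (non-parallelizability and the strict gap already present at $N=1,2$) applies. Combining the two regimes — the $\Theta$-bound for $N\ge N_0$ and the no-perfect-programming argument for $N < N_0$ — gives $F_d < 1$ for all $N \in \N$, as claimed.
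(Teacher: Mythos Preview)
Your two-regime split is unnecessary, and the small-$N$ half has a real gap.

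The paper does not give a separate proof of the Corollary: it is read off directly from the upper-bound argument, which is \emph{non-asymptotic}. Concretely, Lemma~\ref{lemma-k-2k} produces the explicit inequality $\|H_{R,k}\|_\infty \le (x^2+2k+2k^2)/(x^2+2x)$ valid for every $x>0$; picking any $x>k+k^2$ makes the right-hand side strictly less than $1$, for every relevant $k\ge 1$. Combined with Lemmas~\ref{lemma-simply} and~\ref{lemma-last} this gives $F_2<1$ for each fixed $N\ge 1$, and the reduction $dF_d-(d-1)\le F_2$ in Lemma~\ref{lemma-upper-d} then forces $F_d<1$ for each $N$. The $\Theta(1/N^2)$ in Theorem~\ref{theorem} records the asymptotic rate, but the strict inequality holds pointwise in $N$; there is nothing to patch for small $N$.

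Your small-$N$ argument, by contrast, is not a proof. The step ``perfect learning would force perfectly distinguishable memory states'' is false as stated: a perfect programmable device requires only that the retrieval acting on the program state reproduces the target channel, not that program states for different targets be orthogonal (orthogonality is the \emph{conclusion} of the Nielsen--Chuang no-programming theorem for unitary gates, not an assumption, and the measurement version needs its own derivation). Citing~\cite{bisio2011quantum} for ``a strict gap at $N=1,2$'' is also not enough, since that paper does not supply the bound for arbitrary finite $N<N_0$.

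If you insist on deducing the Corollary from the bare asymptotic statement $F_d=1-\Theta(1/N^2)$, there is a one-line fix that replaces your entire second regime: $F_d$ is non-decreasing in $N$ (any scheme using $N$ copies can be run with $N'$ copies by discarding $N'-N$ of them), so once $F_d(N_0)<1$ from the $\Theta$-bound you get $F_d(N)\le F_d(N_0)<1$ for all $N\le N_0$.
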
 

\section{Qubit case}\label{sec:qubit}

In this subsection we investigate more deeply the behavior of 
$\mathcal{F}_2^{\text{avg}}$ for different types of learning schemes $\LL$.

\subsection{Pretty good learning scheme}\label{qubit-pgls}

The first scheme which we analyze will be called the \emph{pretty good learning 
scheme}. Despite its lack of optimality, it provides a relatively high value for 
the average fidelity function asymptotically behaving as 
$\mathcal{F}_2^{\text{avg}}(\LL_{PGLS}) = 1 - 
\Theta\left(\frac1N\right)$. This scheme employs a simple storage strategy, 
which uses only two-qubit entangled memory states and the 
learning process is done in parallel. Moreover, the achieved value of the 
fidelity function is uniform over all qubit von Neumann measurements.

Let us consider a parallel learning scheme with $N$ 
copies of the von Neumann measurement $\PP_{U}$. A sketch of
our scheme is shown in Fig.~\ref{lower-bound-2} and here we present the 
algorithm describing the procedure:
\begin{figure}[h!]
	\centering\includegraphics{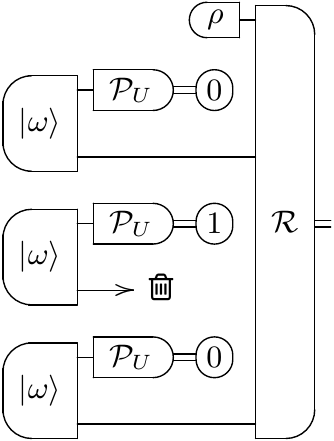}
	\caption{Schematic representations of the pretty good learning
		scheme for $N = 3$. In the learning process we obtained three labels:
		$0, 1, 0$. As labels ``$0$'' are in majority, we reject the label 
		``$1$'' and
		the associated quantum part.\label{lower-bound-2}}
\end{figure}
\begin{enumerate}
	\item We prepare the initial memory state $\sigma$ as a tensor product of
	$N$ maximally entangled states $\ket{\omega } \coloneqq \frac{1}{\sqrt{2}}
	\ketv{\Id_2}$.
	
	\item We partially measure each state $\ket{\omega }$  using  $\PP_U$, 
	obtaining the state $(\PP_U \otimes
	\II_2)(\proj{\omega}).$
	
	\item For each measurement $\PP_{U}$, we obtain one of two possible
	measurement results: ``$0$'' or ``$1$''. In consequence, we get $N_0$ 
	outcomes ``$0$''
	and $N_1$ outcomes ``$1$'', $N_0 + N_1 = N$. The state of the remaining 
	quantum
	part is equal to $\overline{P_{U,0}}^{\otimes N_0} \otimes
	\overline{P_{U,1}}^{\otimes N_1}$ (up to permutation of subsystems).
	Without loss of a generality (w.l.o.g.), we may assume that $N_0 \ge N_1$.
	
	\item By majority vote we reject minority report, \ie we reject all 
	outcomes ``$1$'' and 
	quantum states associated with them. As a result the memory state is given 
	by
	$\sigma_{\PP_U, \SS} = \overline{P_{U,0}}^{\otimes N_0}$.
	
	\item We prepare an arbitrary state $\rho \in \Omega(\HH_2)$.
	
	\item We perform a binary retrieval measurement $\mathcal{R} = \{ R,\Id - R 
	\}$ on $\rho \otimes
	\sigma_{\PP_U, \SS} $.
\end{enumerate}

To construct the effect $R$, let us fix $N_0$ and let $n= N_0-1$. We introduce 
the family of Dicke
states~\cite{mukherjee2020preparing}. The Dicke state $\ket{ D_k^n} $ is the 
$n$-qubit state, which
is equal to the superposition state of all ${ n \choose k }$ basis states of 
weight $k$. For
example, $\ket{ D_1^3} = \frac{1}{\sqrt{3}} \left( \ket{ 100} + \ket{ 010 } + 
\ket{001 } \right)$.
Let us also define
\begin{equation}
	s_n(k,m) \coloneqq \sum_{i=0}^k \sum_{j=0}^{n-k} \delta_{i+j-m} {k \choose
		i } { n-k \choose j } (-1)^{n-k-j }
\end{equation}
being the convolution of binomial coefficients. Consider the effect $R$ of the 
form
\begin{equation}\label{effect-r}
	R= \sum_{k=0}^{n} \proj{R_k},
\end{equation}
where $\ket{R_k} \coloneqq \frac{\ketv{M_k}}{||M_k||_2}$
and matrices $M_k \in \mathrm{M}\left(\HH_2, \HH_{2^{n+1}}\right)$ are given by
{\fontsize{9pt}{0}
	\begin{equation}
		M_k = \sum_{m = 0}^{n+1} \frac{s_n(k, n - m)\ket{0} + s_n(k, n + 1 -
			m)\ket{1}}{\sqrt{n + 1 \choose m}} \bra{ D_m^{n+1}}
\end{equation}}
for $k = 0, \ldots, n$. The proof that $R$ is a valid effect is relegated to
Lemma~\ref{measurement-correctly} in Appendix~\ref{app:pgls}. In this 
learning
scheme the approximation $\QQ_U = \{ Q_{U,0}, \Id_2 - Q_{U,0}\}$ is determined 
by relation $\tr
\left( \rho Q_{U,0} \right) = \tr \left( \left( \rho \otimes 
\overline{P_{U,0}}^{\otimes N_0}
\right) R \right)$. Basing on Lemma~\ref{form-of-q} in 
Appendix~\ref{app:pgls}, the effect
$Q_{U,0}$ has the form
\begin{equation}
	\begin{split}
		Q_{U,0} = \frac{N_0}{N_0+1} P_{U,0}.
	\end{split}
\end{equation}
Provided we observed $N_0$ outcomes ``$0$'', we have that $\mathcal{F}_2(\PP_U, 
\QQ_U) =
\frac{2N_0+1}{2N_0+2}$,  where $N_0$ satisfies $N_0 \ge 
\lceil{\frac{N}{2}}\rceil$. Note, that the
value of $\mathcal{F}_2(\PP_U, \QQ_U)$ does not depend on the choice of $U$. 
The 
average fidelity
function $\mathcal{F}_2^\text{avg}(\LL_{PGLS})$ defined for the pretty good 
learning scheme 
of qubit von Neumann
measurements satisfies 
\begin{equation}
	\begin{split}
		&\mathcal{F}_2^\text{avg}(\LL_{PGLS}) =\\
		&\begin{cases}
			\frac{1}{2^N} \sum\limits_{l=k}^{N} 2 {{N}\choose{l}} 
			\frac{2l+1}{2l+2} 
			, &N= 	2k-1,\\
			\frac{1}{2^N} \left( {{N}\choose{k}}  \frac{2k+1}{2k+2}  +  
			\sum\limits_{l=k+1}^{N} 2 {{N}\choose{l}} \frac{2l+1}{2l+2} \right) 
			, &N= 2k.
		\end{cases}
	\end{split}	 
\end{equation}
In the asymptotic regime, we may simplify the calculations to obtain
\begin{equation}
	\begin{split}
		\mathcal{F}_2^\text{avg}(\LL_{PGLS}) \ge
		\frac{2\lceil{\frac{N}{2}}\rceil +
			1}{2\lceil{\frac{N}{2}}\rceil + 2} = 1 - \Theta\left(\frac1N\right).
	\end{split}
\end{equation}

\begin{corollary}
	In the pretty good learning scheme $\LL_{PGLS} = \left( \sigma,\{ \CC_i 
	\}_{i=1}^{N-1}, \RR \right) $
	the initial state $\sigma$ is defined as a product of $N$ copies of 
	maximally entangled state
	$\ket{\omega}$, processing channels $\{ \CC_i \}_{i=1}^{N-1}$ are 
	responsible for majority
	voting and the measurement $\RR = \{ R,\Id - R \}$ is defined by 
	Eq.\eqref{effect-r}.
\end{corollary}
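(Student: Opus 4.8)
The plan is to verify the statement component by component: it is essentially a dictionary between the informal algorithm of steps~1--6 and the formal triple $\LL=(\sigma,\{\CC_i\}_{i=1}^{N-1},\RR)$ of Section~\ref{sec:setup}, so no new estimate is involved, only a check that each ingredient is admissible. First I would fix the wiring and record that the scheme is parallel. The initial memory state is $\sigma=\proj{\omega}^{\otimes N}$ living on $(\HH_2^{(\mathrm{in})}\otimes\HH_2^{(\mathrm{mem})})^{\otimes N}$, where the $N$ copies of $\HH_2^{(\mathrm{in})}$ are the ports fed, one at a time, to the successive uses of the black box $\PP_U$. To embed such a parallel scheme into the adaptive template it suffices to take the processing channels $\CC_1,\dots,\CC_{N-1}$ to be the trivial channels that carry the accumulated memory forward and hand the next input port to the next invocation of $\PP_U$; this is the precise content of ``the learning process is done in parallel'' and pins down the role of $\sigma$.

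Next I would track what the storage produces. As $\PP_U$ is a measure-and-prepare channel, its $k$-th use on $\ket{\omega}$ yields a classical label $j_k\in\{0,1\}$ together with the post-measurement qubit $\overline{P_{U,j_k}}$ (steps~2--3); after all $N$ uses the memory therefore carries the string $(j_1,\dots,j_N)$ and the state $\overline{P_{U,0}}^{\otimes N_0}\otimes\overline{P_{U,1}}^{\otimes N_1}$, $N_0+N_1=N$, up to permutation of subsystems. The majority vote of step~4 is a classically controlled channel: read the string, record the majority value and the count $N_0=\max(N_0,N_1)\ge\lceil N/2\rceil$, partial-trace out the $N-N_0$ qubits carrying the minority label, and relabel so that the majority value is $0$. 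I would then argue that such a classically controlled partial trace is a CPTP map belonging to the storing stage, so that it can legitimately be regarded as (the nontrivial part of) the collection $\{\CC_i\}_{i=1}^{N-1}$, the rest of whose action is mere routing; this is exactly the sense in which ``the processing channels are responsible for majority voting''. The outcome is the stored state $\sigma_{\PP_U,\SS}=\overline{P_{U,0}}^{\otimes N_0}$ together with the classical register holding $N_0$, i.e.\ $\SS=(\sigma,\{\CC_i\}_{i=1}^{N-1})$ with $\sigma$ as claimed.

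Finally I would identify the retrieval. On an arbitrary input $\rho\in\Omega(\HH_2)$ (step~5) we measure $\rho\otimes\sigma_{\PP_U,\SS}$ with the binary POVM $\RR=\{R,\Id-R\}$ (step~6), where $R$ is the operator of Eq.~\eqref{effect-r} built with $n=N_0-1$; Lemma~\ref{measurement-correctly} guarantees $0\le R\le\Id$, so $\RR$ is a valid POVM, and Lemma~\ref{form-of-q} gives the induced approximation $\QQ_U=\{Q_{U,0},\Id_2-Q_{U,0}\}$ with $Q_{U,0}=\frac{N_0}{N_0+1}P_{U,0}$. Collecting the three items yields $\LL_{PGLS}=(\sigma,\{\CC_i\}_{i=1}^{N-1},\RR)$ with the stated components. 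I expect the only genuinely delicate point to be the middle step --- making precise that classically controlled discarding of the minority ports (and, correspondingly, letting the retrieval effect $R$ depend on the random count $N_0$) does not take us outside the class of learning schemes of Section~\ref{sec:setup}; this is legitimate because the labels $j_k$ and the count $N_0$ are stored in the classical part of the quantum memory and are hence available both to the later processing channels and to $\RR$.
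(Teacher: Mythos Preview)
Your proposal is correct and in fact more detailed than what the paper provides: the paper states this corollary without any separate proof, treating it simply as a summary of the construction given in steps~1--6, so your component-by-component verification is exactly the content implicit in the preceding discussion.
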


Finally, averaging the construction of $\QQ_U$ over all possible combinations 
of measurements'
results $\{ 0,1 \}^{N}$ leads to the following approximation of $\PP_U$.
\begin{corollary}
	The approximation $\QQ_U$ is a convex combination of the original 
	measurement $\PP_U$
	and the maximally depolarizing channel $ \Phi_*$. More precisely,
	\begin{equation}
		\QQ_U = p_0	\PP_{U} + (1-p_0) \Phi_*,
	\end{equation}
where $p_0 = 2\mathcal{F}_2^\text{avg}(\LL_{PGLS})-1$.
\end{corollary}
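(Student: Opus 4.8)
The plan is to compute the averaged approximation $\QQ_U=\{Q_{U,0},Q_{U,1}\}$ from the conditional effects already obtained, and then read off the mixing parameter by comparing with the fidelity. First I would fix a pattern of storage outcomes; after majority voting it retains a number $M$ of copies of $\overline{P_{U,j}}$, where $j$ is the majority label and $M=\max(N_0,N_1)$, so by Lemma~\ref{form-of-q} (applied with the labels $0\leftrightarrow1$ interchanged when $j=1$, the retrieval then being the corresponding relabelling of $R$) the conditional POVM is $\{\tfrac{M}{M+1}P_{U,j},\ \Id_2-\tfrac{M}{M+1}P_{U,j}\}$. Rewriting $\Id_2-\tfrac{M}{M+1}P_{U,j}=\tfrac{1}{M+1}\Id_2+\tfrac{M}{M+1}P_{U,1-j}$, one sees that in \emph{every} branch the coefficient of $P_{U,0}$ in the conditional effect for outcome $0$ equals $\tfrac{M}{M+1}$, and symmetrically for $P_{U,1}$ and outcome $1$.

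Next I would average over the storage outcomes. The reduced state of $\ket{\omega}=\tfrac{1}{\sqrt{2}}\ketv{\Id_2}$ on the measured system is $\Id_2/2$, so each use of $\PP_U$ returns ``$0$'' or ``$1$'' with probability $\tfrac12$ independently of $U$; hence $M=\max(N_0,N_1)$ is a fixed random variable, the number of zeros being $\mathrm{Bin}(N,\tfrac12)$, and the statistics are invariant under $0\leftrightarrow1$ (resolving an even-$N$ tie by a fair coin, or equivalently replacing $\LL_{PGLS}$ by its $0\leftrightarrow1$-symmetrisation, which leaves $\mathcal{F}_2^{\text{avg}}$ unchanged since $U\mapsto UX$ with $X$ the bit-flip unitary preserves the Haar measure). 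Averaging the conditional effects then gives $Q_{U,i}=\alpha\,P_{U,i}+\beta\,\Id_2$ with the \emph{same} constants $\alpha=\E\!\left[\tfrac{M}{M+1}\right]\ge0$ and $\beta\ge0$ for $i=0$ and $i=1$. Imposing $Q_{U,0}+Q_{U,1}=\Id_2$ and using $P_{U,0}+P_{U,1}=\Id_2$ forces $\alpha+2\beta=1$, so $Q_{U,i}=\alpha P_{U,i}+(1-\alpha)\tfrac{\Id_2}{2}$. Reading these POVMs as measure-and-prepare channels and noting that the trivial POVM $\{\tfrac{\Id_2}{2},\tfrac{\Id_2}{2}\}$ realises exactly the maximally depolarising channel $\Phi_*$, this says precisely $\QQ_U=\alpha\,\PP_U+(1-\alpha)\,\Phi_*$.

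It then remains to identify $\alpha$ with $p_0$. Since $\tr P_{U,i}=\tr P_{U,i}^2=1$, we get $\tr(P_{U,i}Q_{U,i})=\alpha+\tfrac{1-\alpha}{2}=\tfrac{1+\alpha}{2}$, hence $\mathcal{F}_2(\PP_U,\QQ_U)=\tfrac12\sum_{i=0}^1\tfrac{1+\alpha}{2}=\tfrac{1+\alpha}{2}$, which is independent of $U$ and therefore equals $\mathcal{F}_2^{\text{avg}}(\LL_{PGLS})$; solving gives $\alpha=2\mathcal{F}_2^{\text{avg}}(\LL_{PGLS})-1=p_0$, and in particular $0\le\alpha\le1$ so the combination is genuinely convex. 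The main thing to be careful about is the bookkeeping in the averaging step: one must check branch by branch that $P_{U,0}$ enters $Q_{U,0}$ with exactly the same coefficient as $P_{U,1}$ enters $Q_{U,1}$ \emph{and} that the $\Id_2$-parts coincide, which is exactly where the $0\leftrightarrow1$ symmetry of the retrieval and of the outcome statistics (including the tie-breaking rule for even $N$) is used; everything else is the short linear-algebra computation above.
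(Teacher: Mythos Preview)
Your argument is correct and follows exactly the route the paper indicates (``averaging the construction of $\QQ_U$ over all possible combinations of measurements' results''): you average the conditional effects from Lemma~\ref{form-of-q} over the $\mathrm{Bin}(N,\tfrac12)$ storage outcomes, use the $0\leftrightarrow 1$ symmetry to get $Q_{U,i}=\alpha P_{U,i}+\tfrac{1-\alpha}{2}\Id_2$, and then read off $\alpha=2\mathcal{F}_2^{\text{avg}}(\LL_{PGLS})-1$ from the fidelity. Your explicit handling of the even-$N$ tie (fair-coin tie-break, equivalently symmetrising the scheme) is a point the paper leaves implicit but is indeed needed for $\beta$ to be the same for $i=0$ and $i=1$, so that the residual channel is genuinely $\Phi_*$.
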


In the pretty good learning scheme, to keep the calculation simple, we 
assumed 
that the retrieval measurement 
$\RR$ uses a memory state that is a tensor product of $N_0$ copies of the 
same state ($\overline{P_{U,0}}$ or $\overline{P_{U,1}}$). The same 
approach was investigated in the paper~\cite{fiuravsek2002universal}, where
the value of the average fidelity function was originally derived
$\mathcal{F}_2(\PP_U, \QQ_U) = \frac{2N_0+1}{2N_0+2}$. However, one may 
improve the learning scheme by using all available output states 
$\overline{P_{U,0}}$ 
and $\overline{P_{U,1}}$. 
In that case, we expect to obtain  a higher 
value of the fidelity function. Such an intuition was confirmed by Gisin and 
Popescu~\cite{gisin1999spin}, who proved that a memory state 
$\overline{P_{U,0}} \otimes \overline{P_{U,1}}$ 
encodes more information of the effect $P_{U,0}$, than a state 
$\overline{P_{U,0}} \otimes \overline{P_{U,0}}$.

\subsection{Learning based on port-based teleportation}

We have observed in Section~\ref{fidelitylower} that learning scheme based on 
DPBT achieves the average fidelity $\mathcal{F}_2^{\text{avg}}(\LL_{DPBT}) = 1 
- \Theta\left(\frac{1}{N^2}\right).$ More precisely, from the proof presented 
in Appendix~\ref{app:lower-bound}, we get 
$\mathcal{F}_2^{\text{avg}}(\LL_{DPBT}) = \frac{1}{3} + \frac{2}{3}F_*$, where 
$F_*$ is the entanglement fidelity of DPBT protocol. For $d=2$ it is 
known~\cite{ishizaka2009quantum} that 
$F_* = \cos^2\left(\frac{\pi}{N+2}\right)$. Hence,
\begin{equation}
	\mathcal{F}_2^{\text{avg}}(\LL_{DPBT}) = \frac{1}{3} + 
	\frac{2}{3}\cos^2\left(\frac{\pi}{N+2}\right).
\end{equation}

The learning scheme can also be constructed  using probabilistic port-based 
teleportation (PPBT) (see for instance \cite{christandl2021asymptotic}). This 
protocol works similarly to $\LL_{DPBT}$ presented 
in Fig.~\ref{fig-dpbt-learning}. The difference is that the final measurement 
$\QQ_0$ returns a label $i \in \{0,\ldots, N \}$, where the label $ i > 0$ 
indicates the success of the teleportation procedure -- the initial state is in 
$i$-th port and the label $i=0$ indicates the protocol's failure. 
The result from~\cite{studzinski2017port} says that the
corresponding optimal probability of success teleportation is $p_0 = \frac{N}{N 
+ 
3}$. That means the approximation $\QQ_U$ achieved by the learning scheme 
$\LL_{PPBT}$is given by 
\begin{equation}
	\QQ_U =p_0 \PP_{U} + (1-p_0)\Phi_*,
\end{equation}
which implies
\begin{equation}
	\mathcal{F}_2^{\text{avg}}(\LL_{PPBT}) = \frac{2N+3}{2N+6}.
\end{equation}

\subsection{Numerical investigation}

In is generally difficult to find an optimal procedure for quantum operations learning. 
It is worth mentioning that the parallel learning schemes match adaptive ones 
for $N=1,2$ but for $N \ge 3$ adaptive strategies achieve slight 
advantage~\cite{bisio2011quantum}.

In the numerical analysis, we compared  average fidelity for the optimal 
parallel learning strategy $\LL_{\text{Parallel}}$ with the optimal adaptive 
strategy $\LL_{\text{Adaptive}}$. The scheme $\LL_{\text{Adaptive}}$ is also 
the best possible scheme available, which is causally structured that is 
$\mathcal{F}_2^{\text{avg}}(\LL_{\text{Adaptive}}) = F_2$. 

\begin{equation*}
\begin{array}{l|c|c|c|c|c}
	N&1&2&3&4&5\\\hline
	\mathcal{F}_2^{\text{avg}}(\LL_{\text{Adaptive}})&
	0.7499&
	0.8114&
	0.8684&
	0.8968&
	0.9189\\\hline
	\mathcal{F}_2^{\text{avg}}(\LL_{\text{Parallel}})&
	0.7499&
	0.8114&
	0.8676&
	0.8955&
	0.9187\\\hline
	\mathcal{F}_2^{\text{avg}}(\LL_{\text{DPBT}})&
	0.5000&
	0.6667&
	0.7697&
	0.8333&
	0.8745\\\hline
	\mathcal{F}_2^{\text{avg}}(\LL_{\text{PGLS}})&
	0.7500&
	0.7917&
	0.8438&
	0.8625&
	0.8854\\\hline
	\mathcal{F}_2^{\text{avg}}(\LL_{\text{PPBT}})&
	0.6250&
	0.7000&
	0.7500&
	0.7857&
	0.8125\\\hline
\end{array}
\end{equation*}

To optimize this problem we used the 
\texttt{Julia} 
programming language along with quantum package 
\texttt{QuantumInformation.jl}\cite{Gawron2018} and SDP optimization via SCS 
solver~\cite{ocpb:16, scs} with a precision $\epsilon = 10^{-8}$. The code is 
available on GitHub~\cite{code22}.

We compare the results obtained in this section in the Figure~\ref{fig:all}.

\begin{figure}[h!]
	\includegraphics[width=\linewidth]{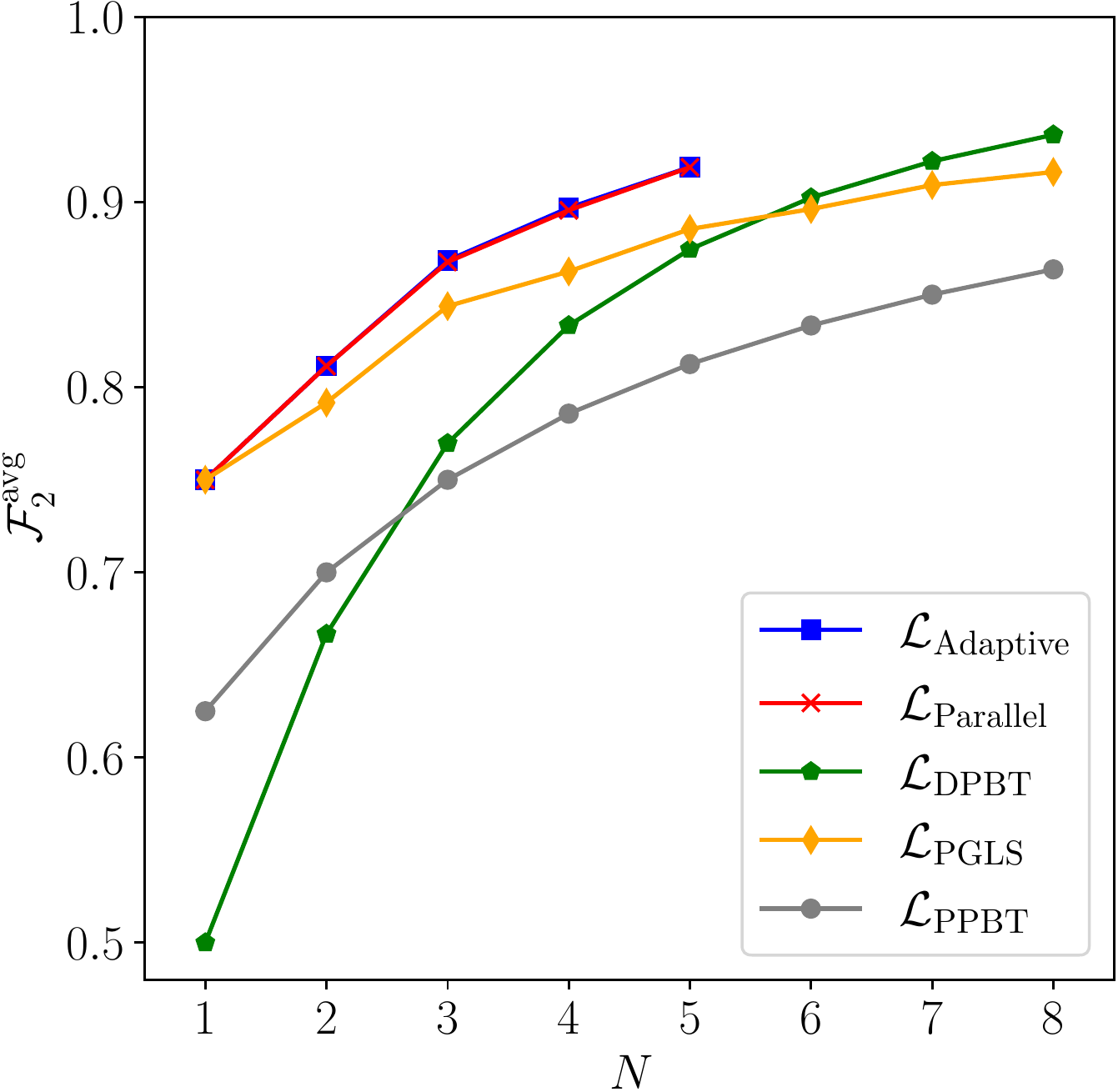}
	\caption{The average fidelity function (dimensionless) 
	$\mathcal{F}_2^{\text{avg}}$ 
	calculated for $N \rightarrow 1$ learning scheme: optimal 
	adaptive strategy, $\LL_{\text{Adaptive}}$ (numerical value, blue squares); 
	optimal parallel scheme, $\LL_{\text{Parallel}}$ (numerical value, red 
	crosses); learning scheme based on DPBT, $\LL_{\text{DPBT}}$ 
	(green pentagons); pretty good learning scheme, 
	$\LL_{\text{PGLS}}$ 
	(orange diamonds); learning scheme based on PPBT, $\LL_{\text{PPBT}}$ 
	(gray circles).
	}\label{fig:all}
\end{figure}

\section{Conclusions and discussion}\label{sec:conclusion}
In this work, we studied  the problem of learning an unknown von 
Neumann measurement of dimension $d $ from a finite number of copies.
Our goal was to find the asymptotic behavior of the maximum value for the 
average fidelity 
function $F_d$. This value was maximized over all possible learning schemes, 
and the average was taken over all von Neumann
measurements.  By using the deterministic PBT protocol, we were able 
to state the lower bound $1 - \Theta\left(\frac{1}{N^2}\right)$, which matched 
the obtained upper bound and hence, solved the given problem.

In the qubit case, we introduced a scheme called the pretty good learning 
scheme. This scheme was a particular case of a parallel learning protocol, and 
it used only two-qubit entangled memory states. The average fidelity function 
calculated for the pretty good learning scheme behaved as 
$1-\Theta\left(\frac{1}{N}\right)$. Moreover, we compared the performance of 
different learning schemes: adaptive, parallel, based on DPBT, based on PPBT 
and the pretty good 
learning scheme for the qubit case. Although, the learning scheme based on PBT 
were asymptotically optimal, we showed that the pretty good learning scheme 
outperforms it for low values of $N$.

This work paves the way toward a complete description of the capabilities of von 
Neumann measurement learning schemes. One potential way forward is the 
probabilistic storage and 
retrieval approach,
widely studied for unitary operations and phase rotations in 
\cite{sedlak2019optimal,
	sedlak2020probabilistic}. According to our numerical results, the 
	probability of retrieval of a
quantum measurement in a parallel scheme is exactly $N/(N+3)$, which 
corresponds to the value
obtained in \cite{sedlak2019optimal} for unitary channels, while adaptive 
strategies for quantum
measurements learning to provide slightly higher probability, starting from $N \ge 
3$.

\section*{Acknowledgments}

We would like to thank the  anonymous reviewer for insightful comments and 
suggestions, especially for introducing us the concept of port-based 
teleportation, which provided the asymptotically optimal lower bound for the 
fidelity value. This work was supported by the project „Near-term quantum 
computers Challenges, 
optimal implementations and applications”
Grant No. POIR.04.04.00-00-17C1/18-00, which is carried out within the 
Team-Net programme of the
Foundation for Polish Science co-financed by the European Union under the 
European Regional
Development Fund.
Paulina Lewandowska and Ryszard Kukulski are holders of European Union scholarship through the European Social Fund, grant InterPOWER (POWR.03.05.00-00-Z305).

\bibliographystyle{ieeetr}
\bibliography{learning.bib}

\onecolumngrid
\newpage
\appendix

\section{Proof of lower bound}\label{app:lower-bound}
\begin{lemma}\label{lemma-dpbt}
	Let us fix $d \in \N$ and let $\LL$ be a parallel learning scheme based on 
	the DPBT protocol introduced in Section~\ref{fidelitylower}. It holds that
	\begin{equation}
	\mathcal{F}_d^{\text{avg}}(\LL) = 1 - 
		\Theta\left(\frac{1}{N^2}\right).
	\end{equation}
\end{lemma}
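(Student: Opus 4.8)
The plan is to show that the learning scheme $\LL$ constructed in Section~\ref{fidelitylower} implements, as its approximation of $\PP_U$, exactly the composed map $\PP_U \circ \TT_0$, and then to reduce its average fidelity to the entanglement fidelity $F_*$ of the optimal DPBT channel $\TT_0$. First I would determine the POVM $\QQ_U = \{Q_{U,j}\}_j$ produced by $\LL$. During storage the ports $\HH_{d^N}^{(B)}$ are acted on by $\PP_U^{\otimes N}$, while the retrieval measurement $\QQ_0$ acts on the disjoint system $\HH_d^{(in)} \otimes \HH_{d^N}^{(A)}$; these two operations therefore commute, so the joint statistics of the port label $i$ and the stored outcomes $(j_1,\ldots,j_N)$ are unambiguous. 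Tracing out the ports $\bar B_i$, summing over $i$, and using the defining formula for $\TT_0$ would give, for every input $\rho$ and every $U$,
\begin{equation}
  \Pr(\text{output} = j \mid \rho, U) \;=\; \tr\!\big[ P_{U,j}\, \TT_0(\rho) \big],
\end{equation}
equivalently $Q_{U,j} = \TT_0^{*}(P_{U,j})$, the Heisenberg-picture dual applied to the effect $P_{U,j}$ (note $\sum_j Q_{U,j} = \TT_0^{*}(\Id_d) = \Id_d$ since $\TT_0$ is trace preserving, so $\QQ_U$ is a genuine POVM). Writing $P_{U,j} = \proj{u_j}$ with $\ket{u_j} \coloneqq U\ket{j}$ and using $\tr[A\,\TT_0^{*}(B)] = \tr[\TT_0(A)\,B]$, this yields
\begin{equation}
  \mathcal{F}_d(\PP_U, \QQ_U) \;=\; \frac1d \sum_{j=0}^{d-1} \tr\!\big[ \TT_0(\proj{u_j})\, \proj{u_j} \big] \;=\; \frac1d \sum_{j=0}^{d-1} \bra{u_j} \TT_0(\proj{u_j}) \ket{u_j}.
\end{equation}

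Next I would carry out the Haar average. For each fixed $j$ the vector $\ket{u_j} = U\ket{j}$ is uniformly distributed on the unit sphere of $\HH_d$, so $\int dU\, \bra{u_j}\TT_0(\proj{u_j})\ket{u_j}$ equals the average input-state fidelity $\overline{F}(\TT_0) \coloneqq \int d\psi\, \bra{\psi}\TT_0(\proj{\psi})\ket{\psi}$ and is independent of $j$; averaging the $d$ identical summands gives $\mathcal{F}_d^{\text{avg}}(\LL) = \overline{F}(\TT_0)$. Invoking the standard identity relating the average input-state fidelity and the entanglement fidelity of a channel on $\C^d$, namely $\overline{F}(\TT_0) = \tfrac{d\, F_e(\TT_0) + 1}{d+1}$, together with the direct computation $F_e(\TT_0) = \tfrac{1}{d^2}\tr(T_0 \projv{\Id_d}) = F_*$, one obtains $\mathcal{F}_d^{\text{avg}}(\LL) = \tfrac{d F_* + 1}{d+1}$. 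Since $1 - \mathcal{F}_d^{\text{avg}}(\LL) = \tfrac{d}{d+1}\,(1 - F_*)$ and $F_* = 1 - \Theta(1/N^2)$ by~\cite{christandl2021asymptotic}, the claim $\mathcal{F}_d^{\text{avg}}(\LL) = 1 - \Theta(1/N^2)$ follows for every fixed $d$; for $d = 2$ it specializes to $\mathcal{F}_2^{\text{avg}}(\LL) = \tfrac13 + \tfrac23 F_*$.

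The step I expect to be the main obstacle is the first one: making rigorous that measuring Bob's ports with $\PP_U^{\otimes N}$ before the retrieval does not interfere with the port-based teleportation, i.e., that ``measure all ports, then perform $\QQ_0$ and report the outcome stored on the selected port'' yields the same output distribution as ``teleport $\rho$ into the selected port and then measure it with $\PP_U$''. This is exactly where the disjointness of the $A$ and $B$ registers (hence the commutativity of $\QQ_0$ and $\PP_U^{\otimes N}$) and the explicit form of the DPBT channel $\TT_0$ do the work; once the identification $\QQ_U = \PP_U \circ \TT_0$ is in place, the remaining computation is routine.
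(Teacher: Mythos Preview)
Your proposal is correct and follows essentially the same route as the paper: both establish $\QQ_U = \PP_U \circ \TT_0$ via the disjointness of the $A$ and $B$ registers, and both reduce the Haar-averaged fidelity to the entanglement fidelity $F_*$, arriving at the identical formula $\mathcal{F}_d^{\text{avg}}(\LL) = \tfrac{1}{d+1} + \tfrac{d}{d+1}F_*$. The only cosmetic difference is that the paper evaluates the Haar integral directly at the Choi level (computing $\int dU\,(U\otimes\bar U)J_\Delta(U\otimes\bar U)^\dagger = \tfrac{1}{d+1}(\Id_{d^2}+\projv{\Id_d})$), whereas you quote the equivalent Horodecki--Nielsen relation $\overline{F} = \tfrac{dF_e+1}{d+1}$; these are the same computation in two languages.
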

\begin{proof}
	Let $\proj{\psi_0}$ and $\QQ_0 = \{Q_{0,i}\}_{i=1}^N$ be a 
	realization of the optimal teleportation strategy $\TT_0$, such 
	that~\cite{christandl2021asymptotic}
	\begin{equation}
		F_* = \frac{1}{d^2}\tr \left(T_0 \projv{\Id_d}\right) = 1 - 
		\Theta\left(\frac{1}{N^2}\right).
	\end{equation}
	Let us introduce the operations $\QQ_{0,i}(\sigma) = \sqrt{Q_{0,i}} \sigma 
	\sqrt{Q_{0,i}}$. 
	Then, the 
	approximation $\QQ_U$ acting on an arbitrary state $\rho$ can be 
	expressed as
	\begin{equation}
	\begin{split}
		\QQ_U(\rho) &= \sum_{i=1}^N \tr_{\HH_d^{(in)} \otimes  \HH_{d^N}^{(A)} 
		\otimes 
			\HH_{d^{N-1}}^{(\bar B_i)}} \left( \left( \QQ_{0,i} \otimes 
			\PP_U^{\otimes N} 
		\right)(\rho \otimes \proj{\psi_0})\right)\\
		 &= \PP_{U} \left(\sum_{i=1}^N \tr_{\HH_d^{(in)} \otimes  
		 \HH_{d^N}^{(A)} 
			\otimes 
			\HH_{d^{N-1}}^{(\bar B_i)}} \left( \left( \QQ_{0,i} \otimes 
		\II_{d^N} 
		\right)(\rho \otimes \proj{\psi_0})\right)\right) = \PP_U(\TT_0(\rho)).
	\end{split}
	\end{equation}
	Let $J_\Delta = \sum_{ i=0}^{d-1} \proj{i} \otimes 
	\proj{i}$ be the Choi matrix of the completely dephasing channel $\Delta$. By using the equality $\mathcal{F}_d(\PP_U, \QQ_U) = \frac{1}{d} 
	\tr(P_U 
	Q_U)$ we obtain
	\begin{equation}
	\begin{split}
		\mathcal{F}_d^{\text{avg}}(\LL) &= \frac{1}{d} \int_U dU  \tr(P_U 
		Q_U) = \frac{1}{d} \int_U dU  \tr\left((U \otimes \overline{U}) 
		J_\Delta (U \otimes \overline{U})^\dagger T_0\right)\\
		&= \frac{1}{d}  \tr\left( \frac{1}{d+1} (\Id_{d^2} + \projv{\Id_d})   
		T_0\right) = \frac{1}{d+1} + \frac{d}{d+1}F_*\\
		&=1 - \Theta\left(\frac{1}{N^2}\right), 
	\end{split}
	\end{equation}
	which completes the proof.
\end{proof}

\section{Proof of Lemma~\ref{lem-upper-main} }\label{app:upper-bound-2}
\begin{figure}[h!]
	\centering\includegraphics{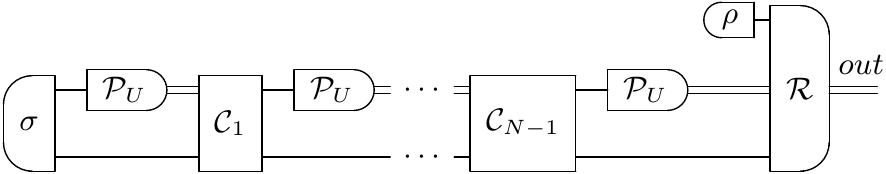}
	\caption{ The schematic representation of a learning scheme $\LL =
		\left( \sigma, \{ \CC_i
		\}_{i=1}^{N-1}, \RR \right)$.
		\label{fig-app-gen}}
\end{figure}

Let $d=2$ and let us fix $N \in \N$. In the $N \rightarrow 1$ learning scheme of 
single-qubit 
von Neumann 
measurements we have access to $N$ copies of a given measurement $\PP_U$, 
which 
is parameterized by 
some unitary matrix $U \in \mathrm{M}(\HH_2)$. Let us consider a general 
single-qubit von Neumann 
measurement learning scheme $\LL$, 
which is depicted in Fig.~\ref{fig-app-gen}. The Choi-Jamio{\l}kowski 
representation of $\LL$ is 
given as $L = \sum_{ i=0}^1 \proj{i} \otimes L_i$, where $\ket{i} \in 
\HH_{2}^{(out)}$. The result of composition of all copies of $\PP_U$ and
the scheme $\LL$ is a measurement $\QQ_U = \{Q_{U, 0}, Q_{U,1}\}$, which is an 
approximation of
$\PP_U$. To define the effects $Q_{U,i}$ we use the link product 
\cite{bisio2016quantum} in the following way 
$\tr(\rho Q_{U,i}) = \tr\left(L_i^\top \left(\rho \otimes P_U^{\otimes 
N}\right)\right)$ for $\rho
\in \Omega(\HH_2)$ and $i=0,1$. Thus, we can calculate the fidelity defined in
Eq.~\eqref{fidelity} between $\PP_U$ and $\QQ_U$
\begin{equation}
	\mathcal{F}_2(\PP_U, \QQ_U) = \frac{1}{2} \sum_{i=0}^1
	\tr(P_{U, i} Q_{U, i}) = \frac{1}{2} \sum_{i=0}^1
	\tr\left[L_i^\top \left(P_{U, i} \otimes P_U^{\otimes N}\right)\right].
\end{equation}
Finally, we can express the maximum value of the average fidelity function 
$F_2$ 
defined 
in Eq.~\eqref{eq:fidelity} as
\begin{equation}\label{eq:fidelity-appendix}
	F_2 = \max_\LL \int_U dU \frac{1}{2} \sum_{i=0}^1
	\tr\left[L_i^\top \left(P_{U, i} \otimes P_U^{\otimes N}\right)\right].
\end{equation}
In the following subsections we will upper bound $F_2$ by using this simplified 
maximization formula.

\subsection{Measurement learning via parallel storage of unitary
	transformations}\label{upper:simplification-of-problem}

\begin{figure}[h!]
	\centering\includegraphics{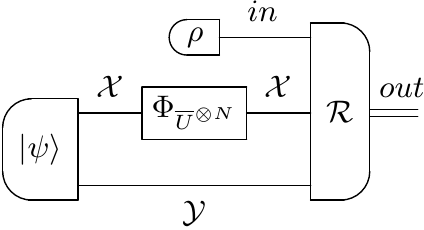}
	\caption{Schematic representation of the setup, which we use to upper bound 
	$F_2$. In this 
		scenario, we are given $N$ copies of unitary channel $\Phi_{\bar U}$ in 
		parallel. Our objective 
		is to approximate the von Neumann measurement $\PP_U$. 
		\label{upper-bound-proof-3} }
\end{figure}

In this section we consider a new learning scheme, presented in 
Fig.~\ref{upper-bound-proof-3}. In 
this scheme, we are given $N$ copies of a unitary channel, $\Phi_{\bar U}$, 
which 
we can use in 
parallel. Our goal is to approximate the measurement $\PP_U$ using the black 
box with the unitary channel $\Phi_{\bar U}$ inside. To achieve this, we 
choose an 
initial memory 
state $\ket{\psi} \in \XX \otimes \YY$ and a retrieval binary measurement $\RR 
= \{R_0, R_1\}$, such 
that $R_i \in \mathrm{M}(\ZZ \otimes \XX \otimes \YY)$, where $\ZZ = 
\HH_2^{(in)}, \XX = \HH_{2^N}$ 
and $\YY = \HH_{2^N}$. We maximize the value of the average fidelity 
function, which will be denoted as $\widetilde{F_2}$. To calculate $\widetilde{F_2}$ 
we may 
observe that for 
a 
given $\rho \in 
\Omega(\ZZ)$, the probability that outcome $i$ occurs is equal $\tr \left( R_i 
\left( 
\rho \otimes ({\bar 
	U}^{\otimes N} \otimes \Id_a) \proj{\psi}({U^\top}^{\otimes N} \otimes 
	\Id_a)\right) \right)$, 
where  $a \coloneqq 2^N$. Therefore, we obtain
\begin{equation}\label{eq:fidelity-new-appendix}
	\widetilde{F_2} = \max_{\substack{\RR= \{R_0, R_1\}\\
			\proj{\psi} \in \Omega(\XX \otimes \YY)}} \int_U dU \frac{1}{2}
	\sum_{i=0}^1
	\tr \left[ R_i \left((U \otimes {\bar U}^{\otimes N} \otimes \Id_a)(
	\proj{i} \otimes  \proj{\psi})(U^\dagger \otimes {U^\top}^{\otimes
		N} \otimes \Id_a)\right) \right].
\end{equation}

\begin{lemma}\label{lemma-unitary-transformation}
	Let $F_2$ be the fidelity function defined in 
	Eq.~\eqref{eq:fidelity-appendix}
	and $\widetilde{F_2}$ be the fidelity function defined in
	Eq.~\eqref{eq:fidelity-new-appendix}. Then, it holds that $F_2 \le 
	\widetilde{F_2}$.
\end{lemma}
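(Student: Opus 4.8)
The plan is to realize the $F_2$–optimization in Eq.~\eqref{eq:fidelity-appendix} as a special case of the $\widetilde{F_2}$–optimization in Eq.~\eqref{eq-upper-main}: to every learning comb $\LL$ I will attach a memory state and a retrieval POVM that are feasible for the unitary–storage scheme of Fig.~\ref{upper-bound-proof-3} and that reproduce $\mathcal{F}_2^{\text{avg}}(\LL)$ exactly, so that taking suprema on both sides yields $F_2 \le \widetilde{F_2}$.

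The first ingredient is an algebraic identity showing that ``$N$ copies of the measurement $\PP_U$'' is the same thing as ``$N$ parallel copies of the unitary channel $\Phi_{\bar U}$ applied to a fixed entangled operator''. From $P_U = \sum_i \proj{i}\otimes\overline{P_{U,i}}$ together with $\overline{P_{U,i}} = \bar U\proj{i}\bar U^\dagger$ (using $\overline{U^\dagger}=\bar U^\dagger$) one reads off $P_U = (\Id_2\otimes\bar U)\,J_\Delta\,(\Id_2\otimes\bar U)^\dagger$. Taking the $N$-th tensor power and reorganizing the $N$ classical-label registers into one copy of $\HH_{2^N}$ and the $N$ input registers into another, this becomes $P_U^{\otimes N} = (\Phi_{\bar U}^{\otimes N}\otimes\II)\big(J_\Delta^{\otimes N}\big)$ up to a fixed relabelling of the two $\HH_{2^N}$ factors, i.e.\ exactly the operator appearing downstream of the $N$ boxes in Fig.~\ref{upper-bound-proof-3} when the (unnormalised, mixed) memory operator $J_\Delta^{\otimes N}$ is sent through them.

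Next I would substitute this identity into the link-product formula~\eqref{eq:fidelity-appendix}. Because $\Phi_{\bar U}^{\otimes N}$ acts on a fixed input and only on the slot registers, the term $\tr[L_i^\top(P_{U,i}\otimes P_U^{\otimes N})]$ rewrites as $\tr\!\big[R_i\big(P_{U,i}\otimes(\Phi_{\bar U}^{\otimes N}\otimes\II)(\tfrac{1}{2^N}J_\Delta^{\otimes N})\big)\big]$, where $R_i$ is obtained from $L_i$ by reorganizing tensor factors, transposing, and rescaling by $2^N$ to absorb $\tr J_\Delta^{\otimes N}=2^N$; this is precisely the integrand of Eq.~\eqref{eq-upper-main} evaluated at the retrieval POVM $\RR=\{R_0,R_1\}$ and the memory operator $\tfrac{1}{2^N}J_\Delta^{\otimes N}$. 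It then remains to check admissibility. Positivity $R_i\ge 0$ is immediate from $L_i\ge 0$. The operator $\tfrac{1}{2^N}J_\Delta^{\otimes N}$ is a bona fide density operator on $\XX\otimes\YY=\HH_{2^N}\otimes\HH_{2^N}$, and since the objective in Eq.~\eqref{eq-upper-main} is linear in the memory state, decomposing it into pure states shows $\widetilde{F_2}$ is at least its value at this mixed memory with $\{R_0,R_1\}$ held fixed. Finally, the first normalisation condition making $\LL$ a valid learning scheme — that $\sum_i L_i$ be the Choi operator of a deterministic network — turns, after the rescaling above, into $R_0+R_1\le\Id$ on $\ZZ\otimes\XX\otimes\YY$, i.e.\ the POVM constraint, while all the remaining (causal, hierarchical) comb constraints on $L$ are simply discarded. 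Chaining these observations gives $\mathcal{F}_2^{\text{avg}}(\LL)\le\widetilde{F_2}$ for every $\LL$, and maximizing over $\LL$ yields $F_2\le\widetilde{F_2}$.

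The hard part will be the bookkeeping in the last step: one must track the reorganization of the $2N$ tensor factors, the transpose, and the power of $2$ carefully enough to be certain that the single first-level comb-normalisation of $L$ becomes exactly $R_0+R_1\le\Id$ on the correct space (a consistency check at the level of traces is that $\tr(R_0+R_1)=2^N\tr L = 2^{2N+1}=\dim(\ZZ\otimes\XX\otimes\YY)$), and that the complex conjugation built into the Choi convention $P_U=\sum_i\proj{i}\otimes\overline{P_{U,i}}$ is what makes $\Phi_{\bar U}$ — and not $\Phi_U$ — the resource that appears in $\widetilde{F_2}$. Everything else is a routine rewriting of link products together with the appeal to linearity in the memory state.
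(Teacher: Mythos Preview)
There is a genuine gap. Your central claim --- that the comb normalisation ``$\sum_i L_i$ is the Choi operator of a deterministic network'' becomes, after rescaling by $2^N$, the POVM constraint $R_0+R_1\le\Id$ --- is false. The trace check you offer is necessary but far from sufficient. Concretely, take $N=1$ and the learning scheme that prepares the pure state $\proj{0}$ on the slot-input register (no auxiliary), applies $\PP_U$, and then measures $\HH_2^{(in)}$ together with the classical label using any POVM $\{R_0,R_1\}$. One computes $L_0+L_1=\Id_{\ZZ\otimes\YY}\otimes\proj{0}_{\XX}$ (up to the reorganisation you describe), so your $R_0+R_1=2^N(L_0+L_1)^\top$ has operator norm $2$, not $1$. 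More generally, a deterministic $N$-comb need only satisfy a hierarchy of partial-trace conditions; nothing forces its Choi operator to be bounded by $2^{-N}\Id$, and it will not be whenever the storage feeds a state far from maximally mixed into any slot. Your proposed $\{R_0,R_1\}$ is therefore not admissible in the $\widetilde{F_2}$ optimisation, and the inequality $\mathcal{F}_2^{\text{avg}}(\LL)\le\widetilde{F_2}$ does not follow.

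The paper's proof circumvents exactly this obstruction, and it is worth seeing how. First, it writes $P_U=(\Delta\otimes\II_2)(\projv{U^\dagger})$ and absorbs the $N$ dephasing channels $\Delta$ into $\LL$; dropping them only enlarges the feasible set, so this yields an upper bound in which the slot resource is the rank-one operator $\projv{U^\dagger}$ rather than $P_U$. Second --- and this is the step your approach lacks entirely --- the paper symmetrises to obtain a maximiser $L$ commuting with $\Id_2\otimes\bar U\otimes(\Id_2\otimes U)^{\otimes N}$, which forces $[\tr_{\HH_s}S,(\Id_2\otimes U)^{\otimes N}]=0$. This commutation is what lets one rewrite the memory state after storage as $(\Id_2\otimes\bar U)^{\otimes N}\proj{\psi}(\Id_2\otimes U^\top)^{\otimes N}$ for a \emph{single pure} $\ket{\psi}$ built from $\sqrt{\tr_{\HH_s}S}$, with the retrieval POVM of $\LL$ itself (hence automatically normalised) playing the role of $\RR$ in $\widetilde{F_2}$. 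In other words, the paper does not try to turn $L_i$ into a POVM by rescaling; it instead shows that, after the dephasing relaxation and symmetrisation, every adaptive storage is already equivalent to a parallel unitary storage on a pure memory, so the retrieval of $\LL$ is directly feasible for $\widetilde{F_2}$.
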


\begin{proof}
	First, we observe that each von Neumann measurement $\PP_U$ can be written as 
	a composition of the
	completely dephasing channel $\Delta$ given by $\Delta(X) = \sum_{ i=0}^1 
	\bra{i}X\ket{i} \proj{i}$,
	and a unitary channel $\Phi_{U^\dagger}$. Equivalently, that means $P_U = 
	(\Delta \otimes
	\II_2)\left(\projv{U^\dagger}\right)$. Due to the fact that the channel 
	$\Delta$ is 
	self-adjoint, we obtain
	\begin{equation}
		\tr\left[L_i^\top \left(P_{U, i} \otimes P_U^{\otimes N}\right)\right] =
		\tr\left[\left(( \II_2 \otimes (\Delta \otimes  \II_2)^{\otimes
			N})(L_i)\right)^\top
		\left(P_{U, i} \otimes \projv{U^\dagger}^{\otimes N}\right)\right].
	\end{equation}
	Note that $\sum_{i=0}^1 \proj{i} \otimes ( \II_2 \otimes (\Delta \otimes 
	\II_2)^{\otimes N})(L_i)$
	represents the composition of the learning scheme $\LL$ and $N$ copies of channels $\Delta$. 
	If we omit processing
	channels $\Delta$, we get the following upper bound on $F_2$ defined in
	Eq.~\eqref{eq:fidelity-appendix}
	\begin{equation}\label{fidelity-upper-proof}
		\begin{split}
			F_2 &\le \max_\LL \int_U dU \frac{1}{2} \sum_{i=0}^1
			\tr\left[L_i^\top \left(P_{U, i} \otimes \projv{U^\dagger}^{\otimes
				N}\right)\right] \\&=
			\frac12 \max_\LL \int_U dU \tr\left[L^\top \left((\Id_2 \otimes
			U) J_\Delta (\Id_2 \otimes U^\dagger) \otimes
			\projv{U^\dagger}^{\otimes N}\right)\right],
		\end{split}
	\end{equation}
	where $J_\Delta$ is Choi-Jamio{\l}kowski representation of $\Delta$. 
	Observe that the maximal value
	of the integral in above equation is achievable by networks $\LL$ which 
	satisfy the following
	commutation relation
	\begin{equation}\label{eq-comm-proof}
		[L,\Id_2 \otimes \bar U \otimes (\Id_2 \otimes U)^{\otimes N}] = 0,
	\end{equation}
	for any unitary matrix $U$. To argue this fact, for any $\LL$ one can
	define a learning network $\tilde \LL$ given by
	\begin{equation}
		\tilde L = \int_U dU \left((\Id_2 \otimes \bar U) \otimes (\Id_2 \otimes
		U)^{\otimes N}\right) L \left((\Id_2 \otimes U^\top) \otimes (\Id_2 
		\otimes
		U^\dagger)^{\otimes N}\right).
	\end{equation}
	It is not difficult to see that $\tilde L$ is a correctly defined 
	Choi-Jamio{\l}kowski
	representation of a quantum learning network~\cite[Theorem 
	2.5]{bisio2016quantum}, which satisfies
	the relation Eq.~\eqref{eq-comm-proof}. Moreover, for both $L$ and $\tilde 
	L$ the value of the
	integral in Eq.~\eqref{fidelity-upper-proof} remains the same.
	
	\begin{figure}[h!]
		\centering\includegraphics{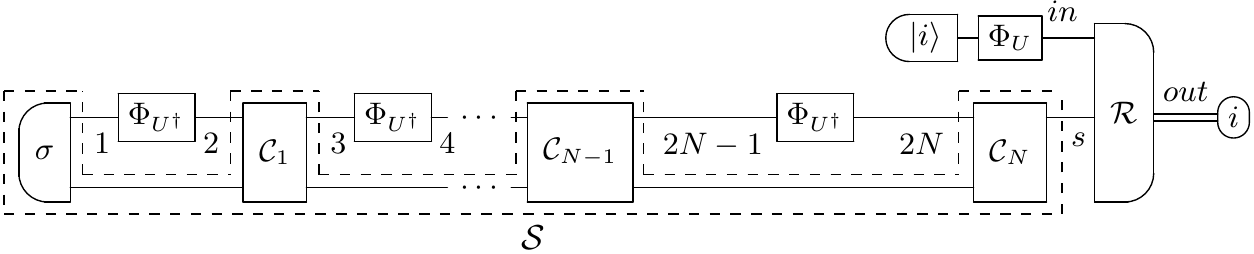}
		\caption{Schematic representations of the right-hand side of
			Eq.~\eqref{fidelity-upper-proof}. With probability $1/2$ we prepare 
			one of basis states
			$\ket{0}$ or $\ket{1}$ and calculate the probability that we obtain 
			output
			$i$. Eq.~\eqref{fidelity-upper-proof} is then the cumulative 
			probability that
			provided the state $\proj{i}$ we measure $i$. The learning scheme 
			$\LL$ is
			given as $\LL = \left( \sigma, \{ \CC_i \}_{i=1}^{N}, \RR \right)$ 
			and
			the storage $\SS$ (marked with a dashed line) is defined as a 
			composition of an initial memory 
			state $\sigma$ and processing channels  $\{ \CC_i \}_{i=1}^{N}$.
			\label{upper-bound-proof-2}}
	\end{figure}
	
	Let us divide $\LL$ into a storage network $\SS$ and a retrieval 
	measurement $\mathcal{R}$, as shown
	in Fig.~\ref{upper-bound-proof-2}. We introduce the input space $\XX_{I} 
	\coloneqq
	\bigotimes_{i=1}^{N} \HH_{2}^{(2k)}$ (denoted with numbers $2,4,\ldots,2N$ 
	on
	Fig.~\ref{upper-bound-proof-2}) and the output space $\XX_{O} \coloneqq 
	\bigotimes_{i=1}^{N}
	\HH_{2}^{(2k-1)}$ (denoted with numbers $1,3,\ldots,2N-1$). Additionally, 
	we define spaces
	$\HH_2^{(in)}, \HH_{2}^{(out)}$ and $\HH_s$. The space $\HH_s$ has 
	arbitrary dimension $s$, but not
	smaller than the dimension of $\XX_I \otimes \XX_O$. The storage $\SS$ can 
	be realized as a sequence
	of isometry channels followed by a partial trace operation \cite[Theorem 
	2.6]{bisio2016quantum}.
	Therefore, by moving the partial trace operation to the retrieval part, 
	$\RR$, we may assume that
	the storage $\SS$ consists of an initial pure state followed by a sequence 
	of isometry channels.
	In consequence, the Choi-Jamio{\l}kowski matrix of $\SS$ has the form $S = 
	\projv{X}$. There exists 
	an isometry $V \in \mathrm{M}\left(\HH_s, \XX_{I} \otimes \XX_{O}\right)$, 
	such that $X
	=\sqrt{\tr_{\HH_s}S}V^\top$. In this notation, $S$ is the solution of $S = 
	(\Id_{4^N} \otimes
	V) \projv{\sqrt{\tr_{\HH_s}S}} (\Id_{4^N} \otimes V)^\dagger.$ Hence, the 
	isometry channel $V \cdot
	V^\dagger$ can be treated as a postprocessing of the storage $\SS$ and also 
	viewed as a part of the
	retrieval $\RR$. In summary, after all changes, the storage $\SS$ is of the 
	form $S =
	\projv{\sqrt{\tr_{\HH_s}S}}$. By using the normalization property 
	\cite[Theorem
	2.5]{bisio2016quantum} for the network presented in 
	Fig.~\ref{upper-bound-proof-2}, we obtain
	$\tr_{\HH_{2}^{(out)}} L = \Id_{2} \otimes \tr_{\HH_s}S.$ Therefore, using 
	the property
	Eq.~\eqref{eq-comm-proof} we have
	\begin{equation}\label{eq-comm-proof-2}
		[\tr_{\HH_s}S, (\Id_2 \otimes U)^{\otimes N}] = 0.
	\end{equation}
	Let us define the memory state $\sigma_{\Phi_{U^\dagger}, \SS}$ as an
	application of the
	storage $\SS$ on $N$ copies of $\Phi_{U^\dagger}$. Then, we have
	\begin{equation}
		\begin{split}
			\sigma_{\Phi_{U^\dagger}, \SS} &= \tr_{\XX_I \otimes \XX_O}
			\left[\projv{\sqrt{\tr_{\HH_s}S}}
			\left(\projv{U^\top}^{\otimes N} \otimes \Id_{4^N}\right)\right] 
			\\&=
			\tr_{\XX_I
				\otimes \XX_O} \left[\projv{(\Id_2 \otimes
				U^\dagger)^{\otimes N}\sqrt{\tr_{\HH_s}S}}
			\left(\projv{\Id_2}^{\otimes N} \otimes \Id_{4^N}\right)\right]
			\\	&= \left(\Id_2 \otimes \bar U \right)^{\otimes N}\proj{\psi}
			\left(\Id_2 \otimes
			U^\top\right)^{\otimes N},
		\end{split}
	\end{equation}
	where  the last equality uses  the property Eq.~\eqref{eq-comm-proof-2} 
	and introduced
	$\ket{\psi} \coloneqq \left(\brav{\Id_2}^{\otimes N} \otimes \Id_{4^N} 
	\right)
	\ketv{\sqrt{\tr_{\HH_s}S}}$. It means that an arbitrary storage 
	strategy $\SS$, which has
	access to $N$ copies of a unitary channel $\Phi_{U^\dagger}$ can be 
	replaced with parallel storage
	strategy of $N$ copies of a unitary channel $\Phi_{\bar U}$. By exploiting 
	this property to
	Eq.~\eqref{fidelity-upper-proof} we obtain
	\begin{equation}
		\begin{split}
			F_2 &\le 	\frac12 \max_\LL \int_U dU \tr\left[L^\top \left((\Id_2 
			\otimes
			U) J_\Delta (\Id_2 \otimes U^\dagger) \otimes
			\projv{U^\dagger}^{\otimes N}\right)\right]
			\\&= \frac12 \max_{\substack{\RR= \{R_0, R_1\}\\
					\SS}} \int_U dU \sum_{i=0}^1 \tr\left[R_i 
					(U\proj{i}U^\dagger \otimes
			\sigma_{\Phi_{U^\dagger}, \SS})\right]
			\\&= \frac12 \max_{\substack{\RR= \{R_0, R_1\}\\
					\proj{\psi} \in \Omega(\XX_I \otimes \XX_O)}} \int_U dU 
					\sum_{i=0}^1
			\tr\left[R_i 	\left(U\proj{i}U^\dagger \otimes (\Id_2 \otimes \bar
			U)^{\otimes N}\proj{\psi} (\Id_2 \otimes
			U^\top)^{\otimes N} \right)\right] = \widetilde{F_2}.
		\end{split}
	\end{equation}
\end{proof}

\subsection{Objective function simplification}\label{upper:standarization}

The aim of this section is to simplify the maximization of the fidelity 
function $\widetilde{F_2}$ defined in
Eq.~\eqref{eq:fidelity-new-appendix}. Let us consider a binary measurement 
$\RR 
= \{R_0, R_1\}$
taken from the maximization domain in Eq.~\eqref{eq:fidelity-new-appendix}. It 
holds that $R_0 + R_1
= \Id_{2^{2N+1}}$, and hence we may write
\begin{equation}\label{fidelity-upper-proof-2}
	\begin{split}
		\widetilde{F_2} &= \max_{\substack{\RR= \{R_0, R_1\}\\
				\proj{\psi} \in \Omega(\XX \otimes \YY)}} \int_U dU \frac{1}{2}
		\sum_{i=0}^1
		\tr \left[ R_i \left((U \otimes {\bar U}^{\otimes N} \otimes \Id_a)(
		\proj{i} \otimes  \proj{\psi})(U^\dagger \otimes {U^\top}^{\otimes
			N} \otimes \Id_a)\right) \right] \\
		&= \frac12 + \frac12 \max_{\substack{\RR= \{R_0, R_1\}\\
				\proj{\psi} \in \Omega(\XX\otimes\YY)}} \int_U dU \tr \left[ R_0
		\left((U \otimes
		{\bar U}^{\otimes N} \otimes \Id_a)(
		\sigma_z \otimes  \proj{\psi})(U^\dagger \otimes {U^\top}^{\otimes
			N} \otimes \Id_a)\right) \right]\\
		&= \frac12 + \frac12 \max_{\substack{\RR= \{R_0, R_1\}\\
				\proj{\psi} \in \Omega(\XX \otimes \YY)}} \tr \left[ \int_U dU
		(U^\dagger
		\otimes
		{U^\top}^{\otimes
			N} \otimes \Id_a) R_0(U \otimes \bar U^{\otimes N}	\otimes \Id_a) 
			(\sigma_z
		\otimes \proj{\psi})\right],
	\end{split}
\end{equation}
where $\sigma_z = \proj{0} - \proj{1}$. Observe that, taking the average value of the matrix $R_0$ 
over the unitary
group $\{ U \otimes \bar U^{\otimes N}	\otimes \Id_a \}_U$ is equivalent to 
taking $R$ such that $
0 \le R \le \Id_{2^{2N+1}}$ and  $[R, U \otimes \bar U^{\otimes N} \otimes 
\Id_a] = 0$ for any qubit
unitary matrix $U$. Equivalently, we may write $[R^{\top_\ZZ}, U^{\otimes N+1} 
\otimes \Id_a] = 0$,
where $\cdot^{\top_\ZZ}$ represents the partial transposition over subsystem 
$\ZZ$. According to
\cite[Theorem 7.15]{watrous2018theory} the matrix $R^{\top_\ZZ}$ commutes with 
$ U^{\otimes N+1}
\otimes \Id_a$ if and only if it is of the form
\begin{equation}
	R^{\top_\ZZ} = \sum_{\pi} W_\pi \otimes M_\pi,
\end{equation}
where matrices $W_\pi \in \mathrm{M}(\ZZ \otimes \XX)$ represent subsystem 
permutation matrices
acting on $N+1$ qubit systems, according to the equation
\begin{equation}
	W_\pi\ket{b_0, b_1,\ldots, b_N} = \ket{b_{\pi(0)}, b_{\pi(1)}, \ldots,
		b_{\pi(N)}}, \, b_k \in \{0,1\}.
\end{equation}
The matrices $M_\pi$ belong to the set $\mathrm{M}(\YY)$ and the index $\pi$ 
goes over all
permutations of the set $\{0,\ldots,N\}$. Hence, we may simplify calculation 
of 
$\widetilde{F_2}$
\begin{equation}\label{fidelity-upper-proof-3}
	\widetilde{F_2} = \frac12 + \frac12 \max_{\substack{R: \,\, 0 \le R \le 
	\Id_{2^{2N+1}}\\
			R = \sum_{\pi} W_\pi^{\top_\ZZ} \otimes M_\pi\\
			\proj{\psi} \in \Omega(\XX \otimes \YY)}} \tr\left[R(\sigma_z
	\otimes \proj{\psi}) \right].
\end{equation}

To simplify the calculation of $\widetilde{F_2}$ even further, we introduce the 
following 
notation of basis
states defined on $N + 1$ qubit system with fixed weight. We enumerate qubit 
subsystems with numbers
$0,1,\ldots,N$. For any subset $A_k \subset \{1,\ldots,N\}$, such that $|A_k| 
= 
k$ we define:
\begin{equation}
	\HH_{2^N} \ni \ket{A_k} \coloneqq \bigotimes_{i = 1} ^ N (\delta(i \in A_k)
	\ket{1} +
	\delta(i
	\not\in A_k) \ket{0}).
\end{equation}
Consider the following subspaces of the $N+1$ qubit space:
\begin{equation}
	\HH^{(k)} \coloneqq \text{span}\left(\ket{0} \ket{A_k}, \ket{1}
	\ket{A_{k + 1}}: A_k, A_{k+1} \subset \{1,\ldots,N\}\right)
\end{equation}
for $k = -1,\ldots,N$, where the vectors exist if and only if the expression is 
well-defined (for instance,
the vectors $\ket{A_{-1}}, \ket{A_{N + 1}}$ do not exist). In this notation, 
subspaces $\HH^{(k)}$
constitute a decomposition of $N+1$ qubit space, $\HH_{2^{N+1}} = \bigoplus_{k 
= -1}^N \HH^{(k)}$.
One may observe, that the matrix $R$ appearing  in the maximization domain 
of
Eq.~\eqref{fidelity-upper-proof-3} is block diagonal in the introduced 
decomposition (in the
partition $\ZZ \otimes \XX / \YY$). For such retrieval $R$, let us consider
\begin{equation}\label{fidelity-upper-proof-4}
	H_R = \tr_{\ZZ} \left(R(\sigma_z \otimes \Id_{4^N}) \right).
\end{equation}
Observe that the matrix $H_R$ is block diagonal in the decomposition
\begin{equation}
	\HH_{2^N} = \bigoplus_{k = 0}^N \text{span}(\ket{A_k}: A_k \subset
	\{1,\ldots,N\}).
\end{equation}
Hence, we will write $H_R$ as
\begin{equation}\label{fidelity-upper-proof-5}
	H_R = \bigoplus_{k = 0}^N H_{R,k}.
\end{equation}
Utilizing the above observations, the maximization problem 
Eq.~\eqref{fidelity-upper-proof-3} can be
written as
\begin{equation}\label{fidelity-upper-proof-6}
	\begin{split}
		\widetilde{F_2} &= \frac12 + \frac12 \max_{\substack{R: \,\, 0 \le R \le
				\Id\\ R = \sum_{\pi} W_\pi^{\top_\ZZ} \otimes M_\pi\\
				\proj{\psi} \in \Omega(\XX \otimes \YY)}} \tr\left[R(\sigma_z
		\otimes \proj{\psi}) \right]=\frac12 + \frac12 \max_{\substack{R: \,\, 0
				\le R \le \Id\\ R = \sum_{\pi} W_\pi^{\top_\ZZ} \otimes M_\pi\\
				\proj{\psi} \in \Omega(\XX \otimes \YY)}} \bra{\psi} H_R 
				\ket{\psi}\\
		&=\frac12 + \frac12 \max_{k = 0,\ldots,N} \max_{\substack{R: \,\, 0 \le 
		R
				\le \Id\\ R = \sum_{\pi} W_\pi^{\top_\ZZ} \otimes M_\pi
		}}  \lambda_1(H_{R,k})
	\end{split}
\end{equation}
where $\lambda_1(\cdot)$ stands for the largest eigenvalue and we used 
shortcut 
$\Id =
\Id_{2^{2N+1}}$. Finally, we observe that $H_R = -(\sigma_x^{\otimes N} 
\otimes 
\Id_a) H_R
(\sigma_x^{\otimes N} \otimes \Id_a)$, where $\sigma_x = \ketbra{0}{1} + 
\ketbra{1}{0}$. It implies
that $H_{R,k}$ is unitarily equivalent to $-H_{R,N-k}$ for any $k$. We use 
this 
fact to write the
final simplification of $\widetilde{F_2}$. The following lemma sums up all the 
considerations presented in this
section.

\begin{lemma}\label{lemma-simply}
	For the fidelity function $\widetilde{F_2}$ defined in 
	Eq.~\eqref{eq:fidelity-new-appendix}
	it holds that
	\begin{equation}\label{fidelity-upper-proof-7}
		\widetilde{F_2} = \frac12 + \frac12 \max_{k = 0,\ldots, \floor{N/ 2}}
		\max_{\substack{R: \,\, 0 \le R \le \Id\\ R = \sum_{\pi} 
		W_\pi^{\top_\ZZ}
				\otimes M_\pi }}  \|H_{R,k}\|_\infty.
	\end{equation}
\end{lemma}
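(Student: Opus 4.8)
The plan is to start from Eq.~\eqref{fidelity-upper-proof-6}, which already establishes
\begin{equation}
\widetilde{F_2} = \frac12 + \frac12 \max_{k=0,\ldots,N}\ \max_{\substack{R:\, 0\le R\le\Id\\ R=\sum_\pi W_\pi^{\top_\ZZ}\otimes M_\pi}} \lambda_1(H_{R,k}),
\end{equation}
and then to perform two cosmetic but careful upgrades of the right-hand side: replace the largest eigenvalue $\lambda_1$ by the spectral norm $\|\cdot\|_\infty$, and halve the range of the index $k$ to $\{0,\ldots,\floor{N/2}\}$. Both steps are driven by the single structural input recorded just before the lemma, namely that $H_{R,k}$ is unitarily equivalent to $-H_{R,N-k}$ for every admissible $R$ and every $k$ (a consequence of $H_R=-(\sigma_x^{\otimes N}\otimes\Id_a)H_R(\sigma_x^{\otimes N}\otimes\Id_a)$ and the fact that $\sigma_x^{\otimes N}$ maps the weight-$k$ subspace of $\XX$ onto the weight-$(N-k)$ one).

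First I would pass from $\lambda_1$ to $\|\cdot\|_\infty$. The bound $\lambda_1(H_{R,k})\le\|H_{R,k}\|_\infty$ is immediate and, being pointwise in $R$ and $k$, gives ``$\le$'' after taking maxima. For the reverse inequality, fix an admissible $R$ and an index $k$; since $H_{R,k}$ is Hermitian, $\|H_{R,k}\|_\infty=\max\{\lambda_1(H_{R,k}),-\lambda_{\min}(H_{R,k})\}$. The first term is clearly at most $\max_{k'}\max_R\lambda_1(H_{R,k'})$. For the second, the unitary equivalence $H_{R,k}\sim -H_{R,N-k}$ yields $-\lambda_{\min}(H_{R,k})=\lambda_1(-H_{R,k})=\lambda_1(H_{R,N-k})$, and $N-k$ is again a legitimate index while $R$ is unchanged, so this too is at most $\max_{k'}\max_R\lambda_1(H_{R,k'})$. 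Hence $\|H_{R,k}\|_\infty\le\max_{k'}\max_R\lambda_1(H_{R,k'})$ for all admissible $R$ and all $k$, giving the matching inequality and therefore
\begin{equation}
\widetilde{F_2} = \frac12 + \frac12 \max_{k=0,\ldots,N}\ \max_{R}\ \|H_{R,k}\|_\infty .
\end{equation}

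Next I would fold the range of $k$. The same unitary equivalence gives $\|H_{R,k}\|_\infty=\|{-H_{R,N-k}}\|_\infty=\|H_{R,N-k}\|_\infty$ for every admissible $R$, so for each $k$ the functions $R\mapsto\|H_{R,k}\|_\infty$ and $R\mapsto\|H_{R,N-k}\|_\infty$ coincide identically and hence have the same maximum over $R$. Since $k>\floor{N/2}$ forces $N-k\le\floor{N/2}$ (elementary floor arithmetic, valid for both parities of $N$), every term with $k>\floor{N/2}$ is already present among the terms $k\in\{0,\ldots,\floor{N/2}\}$ and can be discarded from the outer maximum. This is exactly Eq.~\eqref{fidelity-upper-proof-7}.

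I do not expect a genuine obstacle here: the whole content sits in the symmetry identity for $H_R$ and the block structure of $R$ and $H_R$, both already in hand, so what remains is the spectral-norm bookkeeping above and a one-line inequality on floors. The only place deserving an explicit sentence is the observation that passing from the $k$-block to the $(N-k)$-block of one and the same $R$ leaves us inside the feasible set, because the constraints ``$0\le R\le\Id$'' and ``$R=\sum_\pi W_\pi^{\top_\ZZ}\otimes M_\pi$'' are constraints on $R$ alone and are untouched by the relabeling $k\mapsto N-k$.
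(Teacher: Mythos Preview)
Your proposal is correct and follows exactly the route the paper sketches: the paper records the symmetry $H_R=-(\sigma_x^{\otimes N}\otimes\Id_a)H_R(\sigma_x^{\otimes N}\otimes\Id_a)$, notes the resulting unitary equivalence $H_{R,k}\sim -H_{R,N-k}$, and then simply states Lemma~\ref{lemma-simply} as ``the final simplification'' without spelling out the passage from Eq.~\eqref{fidelity-upper-proof-6} to Eq.~\eqref{fidelity-upper-proof-7}. Your two steps (upgrading $\lambda_1$ to $\|\cdot\|_\infty$ via $-\lambda_{\min}(H_{R,k})=\lambda_1(H_{R,N-k})$, then folding the range of $k$ via $\|H_{R,k}\|_\infty=\|H_{R,N-k}\|_\infty$) are precisely the missing bookkeeping, and your remark that the feasible set for $R$ is untouched by $k\mapsto N-k$ is the one point that actually needs to be said.
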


\subsection{Technical lemmas}\label{ss:2k}
In the following lemma we will observe that optimization problem in  
Eq.~\eqref{fidelity-upper-proof-7} can be reduced to the case $k \in \N, N = 
2k$.
\begin{lemma}\label{lemma-last}
	Let $N \in \N$ and take $k$, such that $ k \le N/2$. It holds that
	\begin{equation}\label{lem-ap-N}
		\max_{\substack{R: \,\, 0 \le R \le \Id\\
				R = \sum_{\pi} W_\pi^{\top_\ZZ} \otimes M_\pi
		}} \|H_{R,k}\|_\infty \le \max_{\substack{\widetilde R: \,\, 0 \le
				\widetilde R \le
				\Id\\
				\widetilde R = \sum_{\pi} \widetilde W_\pi^{\top_\ZZ} \otimes 
				\widetilde M_\pi
		}} \|\widetilde H_{\widetilde R,N-k}\|_\infty,
	\end{equation}
	where the matrix $\widetilde R$ is defined for $\widetilde N = 2(N-k)$ and 
	hence the number of
	systems on which the matrix $\widetilde W_\pi$ acts is $\widetilde N + 1$.
\end{lemma}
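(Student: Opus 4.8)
The plan is to reduce the optimization at level $k$ (with $N+1$ qubit systems) to the optimization at level $N-k$ for the enlarged system $\widetilde N + 1 = 2(N-k)+1$ qubits, by exhibiting, for every admissible $R$ on the left-hand side, an admissible $\widetilde R$ on the right-hand side with $\|H_{R,k}\|_\infty \le \|\widetilde H_{\widetilde R,N-k}\|_\infty$. First I would unpack what the operator $H_{R,k}$ actually is: from Eq.~\eqref{fidelity-upper-proof-4}--\eqref{fidelity-upper-proof-5}, $H_{R,k}$ lives on $\text{span}(\ket{A_k})$, and because $R=\sum_\pi W_\pi^{\top_\ZZ}\otimes M_\pi$ is built from permutations of $N+1$ qubits restricted to the block $\HH^{(k)}$ (spanned by $\ket{0}\ket{A_k}$ and $\ket{1}\ket{A_{k+1}}$), the matrix entries of $H_{R,k}$ depend only on the combinatorial data of which coordinates in $\{1,\dots,N\}$ lie in the weight-$k$ set together with the action of $M_\pi$ on $\YY$. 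The key structural observation to extract here is that this data is governed by the pair $(k, N-k)$ symmetrically up to relabeling: the block $\HH^{(k)}$ for parameter $N$ and the block $\HH^{(N-k)}$ for a possibly different parameter carry the same permutation-module structure once one matches the number of ``$1$'' coordinates with the number of ``$0$'' coordinates.

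The concrete step is then a padding/embedding construction. Given $R$ for parameter $N$ and the fixed $k\le N/2$, I would define $\widetilde N = 2(N-k)$ and build $\widetilde R$ on $\widetilde N + 1$ qubit systems by tensoring the original permutation data with the identity action on $\widetilde N - N = N - 2k \ge 0$ extra ``spectator'' qubit systems that are always in a fixed basis state (and enlarging $\YY$ correspondingly so the constraint $0\le \widetilde R\le \Id$ is preserved and $\widetilde R$ still has the form $\sum_\pi \widetilde W_\pi^{\top_\ZZ}\otimes \widetilde M_\pi$). Concretely the spectator qubits are chosen so that, in the weight-$(N-k)$ block for $\widetilde N$, they contribute all their coordinates to the ``$1$''-set, thereby converting a weight-$k$ configuration on $N$ qubits into a weight-$(N-k)$ configuration on $\widetilde N$ qubits. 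One then checks that $\widetilde H_{\widetilde R, N-k}$, restricted to the image of this embedding, equals $H_{R,k}$ (or is unitarily equivalent to it via the relabeling of coordinates), so its operator norm is at least $\|H_{R,k}\|_\infty$ by eigenvalue interlacing / the variational characterization $\|\cdot\|_\infty = \max_{\|v\|=1} |\langle v, \cdot\, v\rangle|$ applied to vectors supported on the subspace.

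The main obstacle I anticipate is bookkeeping: verifying that the padded $\widetilde R$ genuinely lies in the admissible set — that is, that it is still a nonnegative contraction and still decomposes as a sum $\sum_\pi \widetilde W_\pi^{\top_\ZZ}\otimes \widetilde M_\pi$ over permutations of \emph{all} $\widetilde N + 1$ coordinates (not just the original $N+1$) — and that the partial-transpose-over-$\ZZ$ structure interacts correctly with the added coordinates. One has to be careful that introducing spectator qubits does not force $\widetilde W_\pi$ to mix the control qubit $\ZZ$ into the spectators in a way that breaks the block structure; handling this cleanly likely requires symmetrizing $\widetilde M_\pi$ over the permutation group of the spectator qubits so that $\widetilde R$ commutes with the relevant enlarged unitary group. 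Once the embedding is shown to be admissible and norm-nondecreasing, the claimed inequality Eq.~\eqref{lem-ap-N} follows immediately by taking the maximum over all $R$ on the left.
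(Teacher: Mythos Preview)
Your proposal is essentially the same approach as the paper's proof: pad $R$ by tensoring with the identity on $N-2k$ extra ``spectator'' qubits (both on the $\XX$ side and on the $\YY$ side), and then check that $H_{R,k}$ appears as a sub-block of $\widetilde H_{\widetilde R,\,N-k}$, so the norm can only go up. The paper in fact shows a little more, namely that $\widetilde H_{\widetilde R,\,N-k}$ decomposes as a direct sum $\bigoplus_{i=k}^{N-k}\bigoplus_{B_{N-k-i}} H_{R,i}\otimes\Id$, so that $\|\widetilde H_{\widetilde R,\,N-k}\|_\infty=\max_{i=k,\ldots,N-k}\|H_{R,i}\|_\infty$; your restriction to the sub-block where the spectators are all in the ``$1$''-set recovers precisely the $i=k$ summand, which is all that is needed for the inequality.

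The obstacle you anticipate is not a real one. The construction $\widetilde R=\sum_\pi (W_\pi^{\top_\ZZ}\otimes\Id_{2^{N-2k}})\otimes(M_\pi\otimes\Id_{2^{N-2k}})$ is automatically in the admissible set: $0\le\widetilde R\le\Id$ is preserved under tensoring with $\Id$, and $W_\pi\otimes\Id_{2^{N-2k}}=\widetilde W_{\tilde\pi}$ where $\tilde\pi$ is the permutation of $\{0,\ldots,\widetilde N\}$ extending $\pi$ by the identity on the spectator coordinates. Hence $\widetilde R$ already has the required form $\sum_{\tilde\pi}\widetilde W_{\tilde\pi}^{\top_\ZZ}\otimes\widetilde M_{\tilde\pi}$ without any symmetrization over the spectator permutations, and the partial transpose over $\ZZ$ is unaffected since the spectators never touch the $\ZZ$ register.
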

\begin{proof}
	Let us fix $R$ such that $0\le R \le \Id$ and $R = \sum_{\pi} 
	W_\pi^{\top_\ZZ} \otimes M_\pi$. 
	Define 
	\begin{equation}
		\widetilde R \coloneqq \sum_{\pi} \left(W_\pi^{\top_\ZZ} \otimes 
		\Id_{2^{N - 2k}}\right) 
		\otimes \left(M_\pi \otimes \Id_{2^{N - 2k}}\right).
	\end{equation} 
	We see that matrix $\widetilde{R}$ is in the maximization domain of the 
	right-hand side of
	Eq.~\eqref{lem-ap-N}. Then, we have $\widetilde H_{\widetilde{R}} = \tr_\ZZ 
	\left(\widetilde
	R(\sigma_z \otimes \Id) \right) = \bigoplus_l \widetilde 
	H_{\widetilde{R},l}$. The matrix $
	\widetilde H_{\widetilde{R},N-k}$ is defined on the space spanned by the 
	vectors $\ket{A_{N - k}} 
	\in
	\HH_{2^{\widetilde N}}$ for $A_{N - k} \subset \{1,\ldots,\widetilde N\}$. 
	These vectors can be
	expressed in the form $	\ket{A_{N - k}}= \ket{B_i} \ket{B_{N-k-i}},$ where 
	$\ket{B_i} \in
	\HH_{2^{N}} $ for $B_i$ such that $|B_i|=i$, $B_i \subset \{1,\ldots,N\}$, 
	and 
	$\ket{B_{N-k-i}}\in
	\HH_{2^{N-2k}}$, $B_{N-k-i} \subset \{N+1,\ldots,\widetilde N\}$. Then, we 
	have
	\begin{equation}
		\begin{split}
			(\bra{A_{N-k}} \otimes \Id) \widetilde H_{\widetilde R, N-k}
			(\ket{A'_{N-k}} \otimes
			\Id) = \braket{B_{N-k-i}}{B'_{N-k-i'}}\left(\bra{B_{i}}\otimes 
			\Id\right)
			H_R\left(\ket{B'_{i'}}\otimes \Id\right) \otimes \Id.
		\end{split}
	\end{equation}
	The non-zero blocks exist if and only if $i = i'$ and $B_{N-k-i} = 
	B'_{N-k-i'}$,
	so
	\begin{equation}
		\widetilde H_{\widetilde R, N-k} = \bigoplus_{i = k}^{N-k}
		\bigoplus_{\substack{B_{N-k-i}:\\ B_{N-k-i} \subset 
		\{N+1,\ldots,\widetilde
				N\} }} H_{R,i} \otimes \Id.
	\end{equation}
	That means
	\begin{equation}
		\|\widetilde H_{\widetilde R, N-k}\|_\infty = \max_{i = k,\ldots,N-k} \|
		H_{R,i}\|_\infty \ge \|H_{R,k}\|_\infty.
	\end{equation}
\end{proof}

In the next lemma we will find the upper bound for 
Eq.~\eqref{fidelity-upper-proof-7} in the case $N
= 2k$ for $k \in \N$.

\begin{lemma}\label{lemma-k-2k}
	Let $k \in \N$ and $N = 2k$. For matrices $R$ and $H_{R,k}$ defined in
	Subsection \ref{upper:standarization} we have
	\begin{equation}
		\max_{\substack{R: \,\, 0 \le R \le \Id\\ R = \sum_{\pi} 
		W_\pi^{\top_\ZZ}
				\otimes M_\pi }}  \|H_{R,k}\|_\infty \le 1 -
		\Theta\left(\frac{1}{k^2}\right).
	\end{equation}
\end{lemma}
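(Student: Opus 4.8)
The plan is to analyze the structure of the matrix $H_{R,k}$ when $N=2k$ and show that its operator norm is bounded away from $1$ by an amount of order $1/k^2$. First I would unpack the constraint $R = \sum_\pi W_\pi^{\top_\ZZ}\otimes M_\pi$ together with $0\le R\le \Id$. Since the $W_\pi$ act on the $N+1$ qubits of $\ZZ\otimes\XX$ by permutation, the positivity of $R$ forces strong structure on the coefficient operators $M_\pi$. Restricting to the block $\HH^{(k-1)}$ (in the $\ZZ\otimes\XX$ partition), which is the relevant sector for producing $H_{R,k}$ after taking $\tr_\ZZ(R(\sigma_z\otimes\Id))$, I would write down explicitly how a permutation $W_\pi$ acts on the spanning vectors $\ket{0}\ket{A_{k-1}}$ and $\ket{1}\ket{A_k}$. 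The key point is that on this sector the permutation group action is essentially the action on the $\binom{N}{k}$ and $\binom{N}{k-1}$ dimensional spaces of fixed-weight strings, and the symmetrization couples the ``$\ket 0$'' and ``$\ket 1$'' sectors only through a specific intertwiner.

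The second step is to reduce the eigenvalue problem for $H_{R,k}$ to a scalar optimization. Using Schur–Weyl type decomposition of $(\C^2)^{\otimes N}$ under the symmetric group, the weight-$k$ subspace of $\C^2$'s $N$ qubits carries a multiplicity space, and on each irreducible component the operators built from permutations act as scalars. After this reduction, $\|H_{R,k}\|_\infty$ becomes the maximum over admissible parameters of a concrete function built from ratios of binomial coefficients $\binom{N}{k}$, $\binom{N}{k-1}$, $\binom{N}{k+1}$. The positivity constraint $0\le R\le \Id$ translates, via the block structure on $\HH^{(k-1)}$, into a $2\times 2$ positive semidefiniteness condition relating the scalars in the ``$\ket 0$'' and ``$\ket 1$'' sectors, which bounds the off-diagonal coupling by the geometric mean of the diagonal terms. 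I would then extremize: the objective is maximized by saturating this constraint, giving an explicit expression whose leading behavior as $k\to\infty$ is $1 - c/k^2 + o(1/k^2)$ for an explicit constant $c$ (one expects $c$ related to $\pi^2$, matching the $\cos^2(\pi/(N+2))$ appearing in the DPBT fidelity for $d=2$).

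The last step is the asymptotic expansion. Having the bound as a closed-form expression in $k$ and a single scalar optimization parameter, I would perform the optimization (a one-variable calculus problem, likely solved by a trigonometric substitution so that the optimum is a cosine or a root of a Chebyshev-like equation) and then Taylor-expand around $k=\infty$ to read off the $\Theta(1/k^2)$ gap. Combined with Lemma~\ref{lemma-last}, which moves the general $k\le N/2$ case to the diagonal $N=2k$ case (where the gap $\Theta(1/k^2)$ is the same as $\Theta(1/N^2)$ up to constants), and with Lemma~\ref{lemma-simply}, this yields $\widetilde{F_2}\le 1-\Theta(1/N^2)$ and hence $F_2\le 1-\Theta(1/N^2)$ by Lemma~\ref{lemma-unitary-transformation}.

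The main obstacle I anticipate is the second step: correctly carrying out the representation-theoretic reduction so that the operator optimization over all admissible $R$ collapses to a clean finite-dimensional (ideally scalar) problem. The subtlety is that $R$ is constrained both by the permutation-covariance ansatz \emph{and} by $0\le R\le\Id$ simultaneously, and these interact nontrivially across the $\ket0$/$\ket1$ sectors within each block $\HH^{(k)}$. Getting the admissible region for the reduced parameters exactly right — rather than a loose relaxation that would spoil the constant in $\Theta(1/k^2)$ — is where the real work lies; everything after that is a routine binomial-coefficient asymptotics computation.
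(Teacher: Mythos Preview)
Your proposal takes a genuinely different route from the paper. The paper does not use Schur--Weyl at all. Instead it constructs, for a free parameter $x>0$, four explicit operators $I_+,I_-,I_\oplus,I_\ominus:\mathrm{span}(\ket{A_k})\to\ZZ\otimes\XX$ and shows by a direct permutation-by-permutation case analysis that the combination
\[
G_{R,k}\coloneqq\sum_{s\in\{+,\oplus\}}(I_s^\dagger\otimes\Id_a)R(I_s\otimes\Id_a)-\sum_{s\in\{-,\ominus\}}(I_s^\dagger\otimes\Id_a)R(I_s\otimes\Id_a)
\]
equals $(x^2+2x)H_{R,k}$ identically in $R$. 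Since $0\le R\le\Id$ and $I_+^\dagger I_++I_\oplus^\dagger I_\oplus=I_-^\dagger I_-+I_\ominus^\dagger I_\ominus$ has operator norm $x^2+2k+2k^2$, this yields $\|H_{R,k}\|_\infty\le(x^2+2k+2k^2)/(x^2+2x)$; optimizing at $x\approx 2k^2$ gives a gap of order $1/(2k^2)$. In particular the constant is not $\pi^2$-related and the paper's upper bound is not tight against the DPBT lower bound --- only the $\Theta(1/N^2)$ order is claimed.

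Your representation-theoretic plan could in principle be made to work, but as written it has gaps. First, the claim that ``on each irreducible component the operators built from permutations act as scalars'' is false: in the Schur--Weyl decomposition $(\C^2)^{\otimes n}\cong\bigoplus_\lambda V_\lambda\otimes W_\lambda$ the permutations act through the $S_n$-irreps $W_\lambda$, which are multidimensional except for the fully symmetric one. Second, $H_{R,k}$ receives contributions from \emph{two} blocks of $R$, namely $\HH^{(k)}$ (via $\bra0\sigma_z\ket0=+1$) and $\HH^{(k-1)}$ (via $\bra1\sigma_z\ket1=-1$), not only the latter. Third, the $\sigma_z$-insertion and $\ZZ$-trace break the $S_{N+1}$ symmetry, and for a generic admissible $R$ the resulting $H_{R,k}$ is not $S_N$-invariant on $\XX$; twirling $R$ over $S_N$ only lowers $\|H_{R,k}\|_\infty$ by convexity of the norm, which bounds the maximum in the wrong direction. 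What can be made rigorous is to use the $S_{N+1}$ block structure of $R^{\top_\ZZ}$ and track how the $\ZZ$-trace couples adjacent spin sectors, in the style of PBT analyses; this leads to a tridiagonal-type optimization rather than the scalar $2\times2$ problem you anticipate, and is considerably more involved than your sketch. The paper's explicit-intertwiner trick sidesteps all of this.
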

\begin{proof}
	Let us fix $R$ such that $0\le R \le \Id$ and $R = \sum_{\pi} 
	W_\pi^{\top_\ZZ} \otimes M_\pi$. 
	Through
	the rest of the proof, by $B_l$ we denote subsets of $\{1,\ldots,2k\}$, 
	such that $|B_l| = l$, for
	$l=0,\ldots,2k$. Following the notation introduced in 
	Subsection~\ref{upper:standarization}, we
	define four types of vectors:
	\begin{enumerate}
		\item $\ket{+_{A_k}} = x \ket{0} \ket{A_k} +
		\sum\limits_{\substack{B_{k+1}:\\|B_{k+1} \cap
				A_k| =
				k}}
		\ket{1}\ket{B_{k+1}}$,
		\item $\ket{-_{A_k}} = x \ket{1} \ket{A_k} + \sum\limits_{\substack{
				B_{k-1}: \\ |B_{k-1} \cap A_k| =
				k - 1}}
		\ket{0}\ket{B_{k-1}}$,
		\item $\ket{\oplus_{A_k}} = \sum\limits_{\substack{B_{k+1}:\\|B_{k+1}
				\cap A_k| = 1}}	\ket{1}\ket{B_{k+1}}$,
		\item $\ket{\ominus_{A_k}} = \sum\limits_{\substack{B_{k-1}:\\|B_{k-1}
				\cap
				A_k| = 0}}
		\ket{0}\ket{B_{k-1}}$,
	\end{enumerate}
	for each $A_k \subset \{1,\ldots,2k\}$ and some $x > 0$. Now we define the 
	following matrices:
	\begin{enumerate}
		\item $I_+ = \sum_{A_k} \ketbra{+_{A_k}}{A_k}$,
		\item $I_- = \sum_{A_k} \ketbra{-_{A_k}}{A_k}$,
		\item $I_\oplus = \sum_{A_k} \ketbra{\oplus_{A_k}}{A_k}$,
		\item $I_\ominus = \sum_{A_k} \ketbra{\ominus_{A_k}}{A_k}$.
	\end{enumerate}
	
	For arbitrary $A_k, A'_k \subset \{1,\ldots,2k\}$ we have
	\begin{enumerate}
		\item$\braket{+_{A_k}}{+_{A'_k}} = x^2 \delta(A_k = A'_k)
		+|\{B_{k+1}: |B_{k+1} \cap A_k| = k,|B_{k+1} \cap A'_k| = k \}|,
		$
		\item$
		\braket{-_{A_k}}{-_{A'_k}} = x^2 \delta(A_k = A'_k)
		+ |\{B_{k-1}: |B_{k-1} \cap A_k| = k-1,|B_{k-1} \cap A'_k| = k-1
		\}|,
		$
		\item$
		\braket{\oplus_{A_k}}{\oplus_{A'_k}}
		= |\{B_{k+1}: |B_{k+1} \cap A_k| = 1,|B_{k+1} \cap A'_k| = 1 \}|,
		$
		\item$
		\braket{\ominus_{A_k}}{\ominus_{A'_k}}
		=|\{B_{k-1}: |B_{k-1} \cap A_k| = 0,|B_{k-1} \cap A'_k| = 0 \}|.
		$
	\end{enumerate}
	
	We can observe that if $A_k = A'_k$, then the above inner products are $x^2 
	+
	k$, $x^2 + k$, $k$, $k$, respectively. If $|A_k \cap A'_k| = k -1$ then all 
	the inner products are
	equal to one. Finally, if $|A_k \cap A'_k| < k -1$ then we obtain all the 
	inner products are equal
	to zero. We note two useful facts about matrices $I_+, I_-, I_\oplus, 
	I_\ominus$. Firstly, we have
	\begin{equation}\label{eq:x1}
		I_+^\dagger I_+ + I_\oplus^\dagger I_\oplus = I_-^\dagger
		I_- + I_\ominus^\dagger I_\ominus.
	\end{equation}
	Secondly, one can show that
	\begin{equation}\label{eq:x2}
		\|I_+^\dagger I_+ + I_\oplus^\dagger I_\oplus\|_\infty = x^2 + 2k
		+ 2k^2.
	\end{equation}
	As far as the first equality is straightforward, to show the second one, 
	note that for each $A_k$
	there is exactly $k^2$ sets $A'_k$ such that $|A_k \cap A'_k| =k -1$. This 
	means that by the Birkhoff's
	Theorem we can express $I_+^\dagger I_+ + I_\oplus^\dagger I_\oplus$ in the 
	basis given by vectors
	$\ket{A_k}$ as $	I_+^\dagger I_+ + I_\oplus^\dagger I_\oplus = (x^2 + 
	2k) \Id + 2\sum_{i=1}^{k^2}
	\Pi_i,$ where $\Pi_i$ are permutation matrices. By the triangle inequality 
	we have that the spectral
	norm is no greater than $x^2 + 2k + 2k^2$. By taking the normalized  vector 
	$ \ket{x} \propto \sum_{A_k}
	\ket{A_k}$ we get $ \bra{x} \left(I_+^\dagger I_+ + I_\oplus^\dagger 
	I_\oplus\right) \ket{x} = x^2 +
	2k + 2k^2.$
	
	To state the upper bound for $\|H_{R,k}\|_\infty$ we will use the 
	definition of $H_R$ from
	Eq.~\eqref{fidelity-upper-proof-4} and the decomposition from 
	Eq.~\eqref{fidelity-upper-proof-5}.
	For a given $A_k, A_k' \subset \{1,\ldots,2k\}$ we have that
	\begin{equation}
		\begin{split}
			(\bra{A_k} \otimes \Id_a) H_{R,k} (\ket{A'_k} \otimes \Id_a) =
			\sum_{\substack{\pi:\\ \pi(A_k) = A'_k}} M_\pi - 
			\sum_{\substack{\pi:\\
					\pi(0,A_k) = 0,A'_k}} M_\pi = \sum_{\substack{\pi:\\ \pi(0) 
					\neq 0, \\
					\pi(A_k) = A'_k}} M_\pi -
			\sum_{\substack{\pi:\\ \pi(0) \neq 0, \\
					\pi(0,A_k) = 0,A'_k}} M_\pi.
		\end{split}
	\end{equation}
	Let us now define
	\begin{equation}\label{eq:x3}
		\begin{split}
			G_{R,k} &= (I_+^\dagger \otimes \Id_a) R (I_+ \otimes \Id_a) +
			(I_\oplus^\dagger \otimes \Id_a) R ( I_\oplus \otimes \Id_a)- 
			(I_-^\dagger
			\otimes \Id_a) R (I_- \otimes \Id_a) -
			(I_\ominus^\dagger \otimes \Id_a) R ( I_\ominus \otimes \Id_a).
		\end{split}
	\end{equation}
	
	Taking $A_k, A_k' \subset \{1,\ldots,2k\}$ we have:
	{\fontsize{10pt}{0}
	\begin{equation}
		\begin{split}
			&(\bra{A_k} \otimes \Id_a) G_{R,k} (\ket{A'_k} \otimes \Id_a) \\
			=& \left( x^2 \sum_{\substack{\pi:\\ \pi(A_k) = A'_k}} M_\pi
			+ x \sum_{\substack{B'_{k+1}, \pi:\\|B'_{k+1} \cap A'_k| = k, \\ 
			\pi(0,
					A_k) = B'_{k+1}}} M_\pi +x \sum_{\substack{B_{k+1}, \pi:\\ 
					|B_{k+1} \cap
					A_k| = k, \\\pi(B_{k+1}) = 0, A'_k}} M_\pi + 
					\sum_{\substack{ B_{k+1},
					B'_{k+1}, \pi : \\ |B_{k+1} \cap A_k|
					= k,\\ |B'_{k+1} \cap A'_k| = k,\\ \pi(0,B_{k+1}) = 0, 
					B'_{k+1}}}
			M_\pi+ \sum_{\substack{B_{k+1}, B'_{k+1}, \pi: \\ |B_{k+1} \cap
					A_k| = 1, \\ |B'_{k+1} \cap A'_k| = 1,\\ \pi(0,B_{k+1}) =
					0,B'_{k+1}}} M_\pi \right) \\-&\left( x^2 \sum_{\substack{ 
					\pi: \\
					\pi(0,A_k) = 0,A'_k}}
			M_\pi + x
			\sum_{\substack{ B'_{k-1}, \pi: \\ |B'_{k-1} \cap A'_k| = k-1,\\
					\pi(A_k) = 0,B'_{k-1}}} M_\pi + x
			\sum_{\substack{B_{k-1}, \pi: \\ |B_{k-1} \cap A_k| = k-1, \\
					\pi(0,B_{k-1}) = A'_k}} M_\pi + \sum_{\substack{B_{k-1}, 
					B'_{k-1}, \pi: \\
					|B_{k-1} \cap A_k| = k-1, \\ |B'_{k-1} \cap A'_k| = k-1, \\ 
					\pi(B_{k-1}) =
					B'_{k-1}}} M_\pi + \sum_{\substack{B_{k-1}, B'_{k-1}, \pi: 
					\\
					|B_{k-1} \cap A_k| = 0, \\ |B'_{k-1} \cap A'_k| = 0, \\
					\pi(B_{k-1}) = B'_{k-1}} } M_\pi \right).
		\end{split}
	\end{equation}}
	This can be simplified to
	\begin{equation}
		\begin{split}
			&(\bra{A_k} \otimes \Id_a) G_{R,k} (\ket{A'_k} \otimes \Id_a) \\
			=& \left( x^2 \sum_{\substack{\pi:\\ \pi(A_k) = A'_k}} M_\pi
			+ x \sum_{\substack{B'_{k+1}, \pi:\\|B'_{k+1} \cap A'_k| = k, \\ 
			\pi(0,
					A_k) = B'_{k+1}}} M_\pi + x \sum_{\substack{B_{k+1}, \pi:\\
					|B_{k+1} \cap A_k| = k,
					\\
					\pi(B_{k+1}) = 0, A'_k}} M_\pi \right)\\
			-&\left( x^2 \sum_{\substack{
					\pi: \\ \pi(0,A_k) = 0,A'_k}} M_\pi + x \sum_{\substack{ 
					B'_{k-1}, \pi: \\
					|B'_{k-1} \cap A'_k| =
					k-1,\\
					\pi(A_k) = 0,B'_{k-1}}} M_\pi + x
			\sum_{\substack{B_{k-1}, \pi: \\ |B_{k-1} \cap A_k| = k-1, \\
					\pi(0,B_{k-1}) = A'_k}} M_\pi \right).
		\end{split}
	\end{equation}
	
	Let us write the above as $(\bra{A_k} \otimes \Id_a) G_{R,k} (\ket{A'_k}
	\otimes \Id_a) = \sum_\pi c_\pi M_\pi$, where $c_\pi$ are some constants. 
	For
	each $\pi$, let us determine the value of $c_\pi$:
	
	\begin{itemize}
		\item For $\pi$ such that $\pi(0) = 0, \pi(A_k) = A_k'$ we have $c_\pi
		=x^2 - x^2 = 0$.
		\item For $\pi$ such that $\pi(0) = 0, \pi(A_k) \neq A_k'$ we have
		$c_\pi = 0$.
		\item For $\pi$ such that $\pi(0) \neq 0, \pi(A_k) = A_k'$ we have
		$c_\pi = x^2 + x + x = x^2 + 2x$.
		\item For $\pi$ such that $\pi(0) \neq 0, \pi(A_k) \neq A_k', \pi(0,
		A_k) \neq 0, A_k'$ there exists $a_0 \not\in \{0\} \cup A_k
		$, such that $\pi(a_0) \in \{0\} \cup A'_k $. Therefore, we
		consider two sub-cases:
		\begin{itemize}
			\item If for each $a \not\in \{0\} \cup A_k$ it holds $\pi(a)
			\not\in A'_k$, then $\pi(a_0) = 0$, $\pi(0) \in A'_k$ and $A'_k
			\subset \pi(0,A_k)$. Then, $c_\pi = x - x = 0$.
			\item If $\pi(a_0) \in A'_k$, then we have two options:
			\begin{itemize}
				\item If $\pi(a_0, A_k) = 0,A'_k$, then $c_\pi = x - x =0$.
				\item If $\pi(a_0, A_k) \neq 0,A'_k$, then $c_\pi = 0$.
			\end{itemize}
		\end{itemize}
		\item For $\pi$ such that $\pi(0) \neq 0, \pi(0, A_k) = 0, A_k'$ we
		have $c_\pi = -x^2- x - x = -(x^2 + 2x)$.
	\end{itemize}
	Therefore, we can see that $G_{R,k} = (x^2 + 2x)H_{R,k}$. Then, utilizing 
	Eq.~\eqref{eq:x1},
	Eq.~\eqref{eq:x2} and Eq.~\eqref{eq:x3} we get
	\begin{equation}
		-( x^2 + 2k + 2k^2) \Id \leq G_{R,k} \leq ( x^2 + 2k + 2k^2) \Id,
	\end{equation}
	and finally we obtain $ \|H_{R,k}\|_\infty \le \frac{x^2 + 2k + 2k^2}{x^2 + 
		2x}.$ Minimizing over $x > 0$, we get for $x \approx 2k^2$ that 
	$\|H_{R,k}\|_\infty \le 1 - \Theta(1/k^2)$, which
	finishes this case of the proof.
\end{proof}

\subsection{Proof of Lemma~\ref{lem-upper-main}}\label{ss:gen-n}

\begin{proof}[Proof of Lemma~\ref{lem-upper-main}]
	We have the following sequence of conclusions
	\begin{equation*}
		\begin{array}{rlcl}
			F_2 & \le \widetilde{F_2} && \mbox{(by 
			Eq.~\eqref{eq:fidelity-appendix},
				Eq.~\eqref{eq:fidelity-new-appendix},
				Lemma~\ref{lemma-unitary-transformation})}\\
			& =  \frac12 + \frac12 \max\limits_{k = 0,\ldots, \floor{N/ 2}}
			\max\limits_{\substack{R: \,\, 0 \le R \le \Id\\ R = \sum_{\pi}
					W_\pi^{\top_\ZZ}
					\otimes M_\pi }}  \|H_{R,k}\|_\infty && \mbox{(by
				Lemma~\ref{lemma-simply})}\\
			& \le  \frac12 + \frac12 \max\limits_{k =
				0,\ldots, \floor{N/ 2}} \max\limits_{\substack{\widetilde R: 
				\,\, 0 \le
					\widetilde R \le
					\Id\\ 
					\widetilde R = \sum_{\pi} \widetilde W_\pi^{\top_\ZZ} 
					\otimes
					\widetilde M_\pi
			}} \|\widetilde H_{\widetilde R,N-k}\|_\infty && \mbox{(by
				Lemma~\ref{lemma-last})}\\
			& \le  \frac12 + \frac12 \max\limits_{k =
				0,\ldots, \floor{N/ 2}} 1 - \Theta\left(\frac{1}{(N-k)^2}\right)
			&& \mbox{(by
				Lemma~\ref{lemma-k-2k})}\\
			& =  1 - \Theta\left(\frac{1}{N^2}\right).
			&&
		\end{array}
	\end{equation*}
\end{proof}

\section{Proof of upper bound}\label{app:upper}
\begin{lemma}\label{lemma-upper-d}
	The maximum value of the average fidelity 
	function, defined in
	Eq.~\eqref{eq:fidelity} is upper bounded by
	\begin{equation}
		F_d \le 1 - \Theta\left(\frac{1}{N^2}\right).
	\end{equation}
\end{lemma}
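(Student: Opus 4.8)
The plan is to reduce the general-$d$ upper bound to the qubit bound of Lemma~\ref{lem-upper-main} by embedding a qubit measurement-learning problem inside the $d$-dimensional one. \textbf{Step 1 (reduction to a covariant scheme).} Let $\LL$ attain $\mathcal{F}_d^{\text{avg}}(\LL)=F_d$ with Choi operator $L$. Exactly as in the twirling argument of Lemma~\ref{lemma-unitary-transformation}, the functional $L\mapsto\int_U dU\,\mathcal{F}_d(\PP_U,\QQ_U)$ is linear, so replacing $L$ by its average over the orbit $\{\Id_d\otimes U\otimes(\Id_d\otimes\bar U)^{\otimes N}\}_U$ gives a scheme that is still an admissible learning network (by \cite[Theorem~2.5]{bisio2016quantum}), has the same value, and satisfies $[L,\Id_d\otimes U\otimes(\Id_d\otimes\bar U)^{\otimes N}]=0$ for every unitary $U$. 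For such a covariant $\LL$ the map $W\mapsto\mathcal{F}_d(\PP_W,\QQ_W)$ is constant, hence equals $F_d$ for \emph{every} unitary $W$, not just on average.

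\textbf{Step 2 (qubit embedding).} Fix the two-dimensional subspace $\mathrm{span}\{\ket{0},\ket{1}\}\subset\HH_d$, write $\Pi=\proj{0}+\proj{1}$, and for a single-qubit unitary $V$ set $\tilde V:=V\oplus\Id_{d-2}$. A single call of the qubit box $\PP_V$ can be turned into a call of $\PP_{\tilde V}$ by fixed pre-/post-processing: measure the incoming $d$-dimensional register in $\{\Pi,\proj{2},\dots,\proj{d-1}\}$; on the branch where the outcome is $\Pi$, forward the post-measurement qubit to $\PP_V$ and keep its $\{0,1\}$ label; on the branch where the outcome is $\proj{j}$, feed a dummy qubit to $\PP_V$, discard its label, and output the label $j$ with state $\proj{j}$; then re-assemble a $d$-dimensional output. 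Since $\tilde V$ acts trivially outside the qubit subspace, this reproduces $\PP_{\tilde V}$ exactly while still using $\PP_V$ exactly $N$ times. Inserting these fixed channels around each of the $N$ calls converts $\LL$ into an admissible $N\to1$ qubit learning scheme $\LL'$ whose induced approximation, on an input $\rho\in\Omega(\HH_2)$ regarded as a state on $\HH_d$, is $\QQ_{\tilde V}$; its two-outcome POVM is obtained by compressing to the qubit subspace, e.g.\ $Q'_{V,0}=\Pi\big(Q_{\tilde V,0}+\sum_{i=2}^{d-1}Q_{\tilde V,i}\big)\Pi$ and $Q'_{V,1}=\Pi Q_{\tilde V,1}\Pi$, which is a valid POVM because $\sum_{i}Q_{\tilde V,i}=\Id_d$.

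\textbf{Step 3 (fidelity bookkeeping and conclusion).} Because $P_{\tilde V,0}=P_{V,0}$ and $P_{\tilde V,1}=P_{V,1}$ are supported on the qubit subspace, $\tr(P_{V,j}Q'_{V,j})\ge\tr(P_{\tilde V,j}Q_{\tilde V,j})$ for $j=0,1$; combining this with $\sum_{i=0}^{d-1}\tr(P_{\tilde V,i}Q_{\tilde V,i})=d\,\mathcal{F}_d(\PP_{\tilde V},\QQ_{\tilde V})$ and $\tr(P_{\tilde V,i}Q_{\tilde V,i})\le1$ for $i\ge2$ gives
\begin{equation}
\mathcal{F}_2(\PP_V,\QQ'_V)=\tfrac12\sum_{j=0}^{1}\tr(P_{V,j}Q'_{V,j})\ \ge\ \tfrac12\big(d\,\mathcal{F}_d(\PP_{\tilde V},\QQ_{\tilde V})-(d-2)\big).
\end{equation}
By Step~1 the right-hand side equals $\tfrac12(dF_d-(d-2))$ for every $V$, so averaging over $V$ yields $\mathcal{F}_2^{\text{avg}}(\LL')\ge\tfrac12(dF_d-(d-2))$. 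Since $F_2\ge\mathcal{F}_2^{\text{avg}}(\LL')$ and, by Lemma~\ref{lem-upper-main}, $F_2\le1-\Theta(1/N^2)$, we obtain $dF_d-(d-2)\le2-\Theta(1/N^2)$, i.e.\ $F_d\le1-\Theta(1/(dN^2))=1-\Theta(1/N^2)$ for fixed $d$, which is the claim (and, together with Lemma~\ref{lemma-dpbt}, completes Theorem~\ref{theorem}).

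\textbf{Main obstacle.} The delicate point is Step~1: making rigorous, in the link-product/quantum-network formalism, that the optimal measurement-learning network may be symmetrised into a fully covariant one whose fidelity is target-independent. This parallels Lemma~\ref{lemma-unitary-transformation} but must be carried out for the measurement functional at general $d$, carefully tracking which registers carry $U$ and which carry $\bar U$ so that the twirl leaves both the objective and the network constraints invariant. Steps~2 and~3 are then essentially bookkeeping; the only things to check with care are that the inserted fixed channels genuinely implement $\PP_{\tilde V}$ from $\PP_V$ (in particular that each of the $N$ boxes is still used) and that $\LL'$ meets the defining constraints of a learning scheme.
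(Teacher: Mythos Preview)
Your proposal is correct and follows essentially the same strategy as the paper: symmetrise to a covariant optimal scheme so that $\mathcal{F}_d(\PP_W,\QQ_W)=F_d$ for every $W$, embed the qubit measurement $\PP_V$ as $\PP_{V\oplus\Id_{d-2}}$ to build a qubit learning scheme $\LL'$ from $\LL$, and then invoke Lemma~\ref{lem-upper-main}. The only notable difference is the bookkeeping in Step~3: the paper merges outcomes $\{1,\dots,d-1\}$ into ``$1$'' and passes through the auxiliary inequalities $F_d\le\frac{d-1}{d}+\frac{1}{d}\min_i\tr(P_{U,i}Q_{U,i})$ and $\tr(P_{U,j_1}Q_{U,j_2})\le 1-\min_i\tr(P_{U,i}Q_{U,i})$ to obtain $\mathcal{F}_2\ge dF_d-(d-1)$, whereas you merge $\{0,2,\dots,d-1\}$ into ``$0$'' and use the simpler observation $\tr(P_{\tilde V,i}Q_{\tilde V,i})\le 1$ for $i\ge 2$ to get $2\mathcal{F}_2\ge dF_d-(d-2)$; your route is slightly more direct and in fact yields the marginally sharper constant $F_d\le 1-\tfrac{2}{d}(1-F_2)$ versus the paper's $F_d\le 1-\tfrac{1}{d}(1-F_2)$, though of course both give $1-\Theta(1/N^2)$ for fixed $d$.
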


\begin{proof}
	The thesis is true for $d=2$ which follows from Lemma~\ref{lem-upper-main}. 
	Let us fix $d \in \N$. Take the optimal learning scheme $\LL$ such 
	that it achieves $\mathcal{F}_d^{\text{avg}}(\LL) = F_d$. Without loss of 
	the generality we assume that $\LL$ satisfies
	\begin{equation}
		[L, \Id_d \otimes U \otimes (\Id_d \otimes \bar{U})]^{\otimes N}]=0
	\end{equation}
	for any unitary matrix $U \in \mathrm{M}(\HH_d)$.
	Then, for any $U$, we have
	\begin{equation}
		F_d = \mathcal{F}_d(\PP_U, \QQ_U) \le \frac{d-1}{d} + \frac1d \min_i 
		\tr(P_{U, i} Q_{U, i}).
	\end{equation}
	Moreover, for $j_1 \neq j_2$ it holds
	\begin{equation}
		\tr(P_{U, j_1} Q_{U, j_2}) \le \tr(P_{U, j_1} (\Id_d - Q_{U, j_1})) \le 
		1 - \min_i 
		\tr(P_{U, i} Q_{U, i}).
	\end{equation}
	
	Now, we use $\LL$ to construct a new learning scheme $\LL'$ of qubit von 
	Neumann measurements in the following way. 
	
	Let $\Pi \in \mathrm{M}(\HH_d)$ 
	be a projector onto 
	$\ket{0}, \ket{1}$ and define an isometry matrix $V = \sum_{ i=0}^1 
	\proj{i} \in \mathrm{M}(\HH_d,\HH_2)$. Having access to unknown  qubit von 
	Neumann measurement $\PP_U$ we may use it to implement  von 
	Neumann measurement $\PP_{U \oplus \Id_{d-2}}$ acting on $\Omega(\HH_d)$. 
	To do that, we take 
	$\sigma \in \Omega(\HH_d)$ and measure it in the following way
	\begin{equation}
		\proj{0} \otimes \Pi \sigma \Pi + \proj{1} \otimes \sum_{i=2}^{d-1} 
		\tr(\sigma \proj{i}) \proj{i}.
	\end{equation}
	If on the first system we measure ``$0$'', then we can project the state 
	$V^\dagger (\Pi \sigma \Pi) V$ into $\Omega(\HH_2)$ and measure it by using 
	$\PP_U$. Otherwise, we do nothing. As a result we implemented the 
	measurement of the form
	\begin{equation}
		\sum_{i=0}^1 \tr(V^\dagger \sigma V U \proj{i} U^\dagger) \proj{i} + 
		\sum_{i=2}^{d-1} \tr(\sigma \proj{i}) \proj{i} = \PP_{U \oplus 
		\Id_{d-2}}(\sigma).
	\end{equation}
	During the retrieval stage, we project the input state $\rho \in 
	\Omega(\HH_2)$ 
	into $V\rho V^\dagger$. Moreover, as the output of $\LL$ is a classical 
	label ``$0$'', $\ldots$, ``$d-1$'', to finalize the construction of $\LL'$, 
	all the labels ``$1$'', $\ldots$, ``$d-1$'' are returned as ``$1$''.
	
	For a given unitary matrix $U \in \mathrm{M}(\HH_2)$ and the learning 
	network $\LL'$ we may calculate
	\begin{equation}
	\begin{split}
		\mathcal{F}_2(\PP_U, \QQ_U) &= \frac{1}{2} \left(\tr(VP_{U,0}V^\dagger 
		Q_{U \oplus \Id_{d-2}, 0 } ) + \tr(VP_{U,1}V^\dagger(\Id_d -  
		Q_{U \oplus \Id_{d-2}, 0 }) )\right)\\
		&\ge \frac12 \min_i 
		\tr(P_{U \oplus \Id_{d-2}, i} Q_{U \oplus \Id_{d-2}, i}) + \frac12 
		\left(1 - \tr(P_{U \oplus \Id_{d-2}, 1} Q_{U \oplus \Id_{d-2}, 
		0})\right)\\
	&\ge \min_i 
	\tr(P_{U \oplus \Id_{d-2}, i} Q_{U \oplus \Id_{d-2}, i}) \ge d F_d - (d-1).
	\end{split}
	\end{equation}
	Therefore, we get
	\begin{equation}
		dF_d - (d-1) \le \mathcal{F}_2^{\text{avg}}(\LL') \le F_2 \le 1 - 
		\Theta\left(\frac{1}{N^2}\right),
	\end{equation}
which ends the proof.
\end{proof}

\section{Pretty good learning scheme}\label{app:pgls}

The pretty good learning scheme $\LL_{PGLS}= \left( \sigma, \{ \CC_i 
\}_{i=1}^{N-1}, \RR \right)$ consists
of the initial state $\sigma$, which is a tensor product of $N$ copies of the 
maximally entangled
state $\ket{ \omega} = \frac{1}{\sqrt{2}} \ketv{\Id_2}$, processing channels 
$\{
\CC_i\}_{i=1}^{N-1}$ that are responsible for majority voting (see 
Section~\ref{qubit-pgls}) and
the measurement $\RR = \{ R, \Id - R \}$. To construct the effect $R$, we fix 
$N_0 \in \N$ and take
$n= N_0-1$. Let us define
\begin{equation}
	s_n(k,m) \coloneqq \sum_{i=0}^k \sum_{j=0}^{n-k} \delta_{i+j-m} {k \choose
		i } { n-k \choose j } (-1)^{n-k-j },
\end{equation}
being the convolution of binomial coefficients. We consider the effect $R$ of 
the form
\begin{equation}\label{effect-rr}
	\begin{split}
		R &= \sum_{k=0}^{n} \proj{R_k},  \quad \text{ such that}
		\ket{R_k} = \frac{\ketv{M_k}}{||M_k||_2},  \\ 
		\mathrm{M}(\HH_2,
		\HH_{2^{n+1}} ) \ni M_k &= \sum_{m = 0}^{n+1} \frac{s_n(k, n - 
			m)\ket{0} + 
			s_n(k, n + 1 -
			m)\ket{1}}{\sqrt{n + 1 \choose m}} \bra{ D_m^{n+1}},
	\end{split}
\end{equation}
for $k = 0, \ldots, n$. 
\begin{lemma}\label{remark-ab}
	Let $\ket{x} = \left[\begin{array}{cc}a\\b\end{array}\right]$, $a,b
	\in \C$. Then, we have $M_k \ket{x}^{\otimes n+1} = (a+b)^k (a-b)^{n-k}
	\ket{x}$.
\end{lemma}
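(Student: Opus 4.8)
The plan is to compute $M_k\ket{x}^{\otimes n+1}$ head‑on, expanding the product vector in the Dicke basis so that the normalization factors $\binom{n+1}{m}^{-1/2}$ appearing in \eqref{effect-rr} cancel. Writing $\ket{x}=a\ket{0}+b\ket{1}$ and using that $\ket{D_m^{n+1}}$ is the uniform superposition of the $\binom{n+1}{m}$ computational basis vectors of Hamming weight $m$, one has $\ket{x}^{\otimes n+1}=\sum_{m=0}^{n+1}a^{n+1-m}b^{m}\binom{n+1}{m}^{1/2}\ket{D_m^{n+1}}$. Substituting this into $M_k$ and using $\braket{D_{m'}^{n+1}}{D_m^{n+1}}=\delta_{m,m'}$, the square roots cancel and one is left with
\begin{equation}
M_k\ket{x}^{\otimes n+1}=c_0(a,b)\ket{0}+c_1(a,b)\ket{1},
\end{equation}
where $c_0(a,b)=\sum_{m=0}^{n+1}s_n(k,n-m)\,a^{n+1-m}b^{m}$ and $c_1(a,b)=\sum_{m=0}^{n+1}s_n(k,n+1-m)\,a^{n+1-m}b^{m}$.

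The key step is to read $s_n(k,\cdot)$ off as the coefficient sequence of a polynomial: directly from its definition, $s_n(k,l)$ is the coefficient of $z^l$ in $(1+z)^k(z-1)^{n-k}$, since $(1+z)^k=\sum_i\binom{k}{i}z^i$ and $(z-1)^{n-k}=\sum_j\binom{n-k}{j}(-1)^{n-k-j}z^j$. In particular $s_n(k,l)=0$ unless $0\le l\le n$, which lets me trim the two sums. Reindexing $c_0$ by $l=n-m$ and $c_1$ by $l=n+1-m$, both collapse to multiples of the same sum: $c_0(a,b)=a\sum_{l=0}^{n}s_n(k,l)\,a^{l}b^{n-l}$ and $c_1(a,b)=b\sum_{l=0}^{n}s_n(k,l)\,a^{l}b^{n-l}$.

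Finally, by homogenizing the generating‑function identity, $\sum_{l=0}^{n}s_n(k,l)\,a^{l}b^{n-l}=b^{n}(1+a/b)^{k}(a/b-1)^{n-k}=(a+b)^{k}(a-b)^{n-k}$ (the division by $b$ is immaterial, as both sides are polynomials in $a,b$ agreeing on the dense set $b\neq0$). Hence $M_k\ket{x}^{\otimes n+1}=(a+b)^{k}(a-b)^{n-k}\bigl(a\ket{0}+b\ket{1}\bigr)=(a+b)^{k}(a-b)^{n-k}\ket{x}$, which is the claim. I do not expect a serious obstacle here; the only point requiring care is the index bookkeeping caused by the shift between $n$ and $n+1$ (the Dicke label $m$ reaches $n+1$ while $s_n$ is supported on $\{0,\dots,n\}$), after which everything reduces to a one‑line manipulation. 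This identity is what makes the retrieval effect tractable downstream: since $\overline{P_{U,0}}$ is rank one, evaluating $R=\sum_k\proj{R_k}$ with $\ket{R_k}\propto\ketv{M_k}$ on a product memory state $\overline{P_{U,0}}^{\otimes N_0}$ reduces, via Lemma~\ref{remark-ab}, to scalar computations, yielding the stated forms of $Q_{U,0}$ and of $\mathcal{F}_2(\PP_U,\QQ_U)$.
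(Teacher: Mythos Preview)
Your proof is correct and follows essentially the same route as the paper's: expand $\ket{x}^{\otimes n+1}$ in the Dicke basis so that the factors $\binom{n+1}{m}^{-1/2}$ in $M_k$ cancel, reindex to pull out the common scalar $\sum_{l}s_n(k,l)a^l b^{n-l}$, and identify this sum as $(a+b)^k(a-b)^{n-k}$. The only difference is expository---you make the generating-function reading of $s_n(k,\cdot)$ explicit, whereas the paper jumps directly from $\sum_m s_n(k,m)a^m b^{n-m}$ to $(a+b)^k(a-b)^{n-k}$.
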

\begin{proof}
	Direct calculations reveal
	\begin{equation}
		\begin{split}
			M_k \ket{x}^{\otimes n+1} & =\left[
			\begin{array}{cc} \sum_{m=0}^n {n+1 \choose n-m} \cdot
				\frac{s_n(k,m)}{{n+1 \choose n-m}} a^{m+1} b^{n-m} \\ 
				\sum_{m=0}^n {n+1
					\choose n+1-m} \cdot \frac{s_n(k,m)}{{n+1 \choose n+1-m}} 
				a^{m}
				b^{n+1-m} \end{array} \right]  = \sum_{m=0}^n s_n(k,m) a^m 
			b^{n-m}
			\ket{x} \\ & = (a+b)^k (a-b)^{n-k} \ket{x}.
		\end{split}
	\end{equation}
\end{proof}
To prove that $R$ is a valid effect, let us now define \begin{equation}
	M \coloneqq \left[ s_n(k,m) \right] _{k,m=0}^n
\end{equation} and
a diagonal matrix \begin{equation}
	D \coloneqq \sum_{m=0}^n  \frac{1}{{n \choose m}} \proj{m}. 
\end{equation} 
\begin{lemma}\label{remark-squere}
	With the notation given above, it holds that $M^2 = 2^n \Id_{n+1}$.
\end{lemma}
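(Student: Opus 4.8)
The plan is to recognize $s_n(k,m)$ as a polynomial coefficient and thereby reduce the matrix identity to the scalar identity already contained in Lemma~\ref{remark-ab}. The key observation is that, straight from the definition, $s_n(k,m)$ is the coefficient of $x^m$ in the polynomial $p_k(x) \coloneqq (1+x)^k (x-1)^{n-k}$: expanding $(1+x)^k = \sum_i \binom{k}{i} x^i$ and $(x-1)^{n-k} = \sum_j \binom{n-k}{j}(-1)^{n-k-j} x^j$ and collecting the $x^m$ term reproduces exactly the convolution defining $s_n(k,m)$. Stripped of its $\ket{x}$ factor, Lemma~\ref{remark-ab} then reads as the scalar identity $\sum_{m=0}^n s_n(k,m)\, a^m b^{\,n-m} = (a+b)^k (a-b)^{\,n-k}$, valid for all $a,b$.

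With this in hand I would compute the $(k,l)$ entry of $M^2$ directly. By linearity of coefficient extraction,
\begin{equation}
(M^2)_{k,l} = \sum_{m=0}^n s_n(k,m)\, s_n(m,l) = [x^l]\sum_{m=0}^n s_n(k,m)\,(1+x)^m (x-1)^{\,n-m},
\end{equation}
where $[x^l]$ denotes the operator picking out the coefficient of $x^l$. Applying the scalar identity above with $a = 1+x$ and $b = x-1$, so that $a+b = 2x$ and $a-b = 2$, collapses the inner sum to $(2x)^k\, 2^{\,n-k} = 2^n x^k$. Hence $(M^2)_{k,l} = [x^l]\, 2^n x^k = 2^n\, \delta_{k,l}$, that is, $M^2 = 2^n \Id_{n+1}$.

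There is essentially no serious obstacle here: the only genuine step is spotting that $s_n(k,m)$ is the $x^m$-coefficient of $(1+x)^k (x-1)^{n-k}$, after which everything is a formal manipulation feeding into Lemma~\ref{remark-ab}. If one preferred to sidestep the coefficient-extraction bookkeeping, an equivalent route is to evaluate $M^2$ on the $n+1$ test vectors $w(t) = \sum_{m=0}^n t^m \ket{m}$ for $n+1$ distinct values of $t$; the same identity yields $M^2 w(t) = 2^n w(t)$, and a Vandermonde argument forces $M^2 = 2^n \Id_{n+1}$. I would nonetheless present the coefficient-extraction version, as it is shorter and requires no genericity argument.
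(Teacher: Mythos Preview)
Your proof is correct. The paper, however, takes precisely the alternative route you sketch at the end: it evaluates $M$ on the vectors $\ket{x} = [x^m]_{m=0}^n$, obtains $M\ket{x} = (x-1)^n\,[\,((x+1)/(x-1))^k\,]_{k=0}^n$, iterates once more to get $M^2\ket{x} = 2^n\ket{x}$, and concludes via the spanning property of these Vandermonde-type vectors. Your primary argument instead computes the entries of $M^2$ directly by coefficient extraction, feeding the scalar identity from Lemma~\ref{remark-ab} with $a=1+x$, $b=x-1$. The underlying mechanism is the same generating-function identity $\sum_m s_n(k,m)x^m = (1+x)^k(x-1)^{n-k}$; the difference is only whether one packages it as ``$M^2$ is the identity on a spanning set'' or as ``every off-diagonal entry of $M^2$ vanishes''. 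Your version avoids the (easily fixed) division-by-zero at $x=1$ that lurks in the paper's formula and requires no genericity argument, at the modest cost of tracking a formal variable. Either write-up is fine.
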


\begin{proof}
	First, observe that $\C^{n+1} = \text{span} \left( [x^k]_{k=0}^n: x \in \C
	\right).$ Let us take any vector of the form $\ket{x} \coloneqq
	[x^k]_{k=0}^n$, where $ x \in \C$. We have
	\begin{equation}
		\begin{split}
			M\ket{x} &= \left[ \sum_{m=0}^n s_n(k,m)x^m \right]_{k=0}^n =
			\left[ (x+1)^k (x-1)^{n-k} \right]_{k=0}^n = (x-1)^n \left[
			\left(\frac{x+1}{x-1}\right)^k \right]_{k=0}^n.
		\end{split}
	\end{equation}
	Finally, we calculate
	\begin{equation}
		\begin{split}
			M^2\ket{x} &= (x-1)^n \left( \frac{x+1}{x-1} -1\right)^n \left[ 
			\left(
			\frac{\frac{x+1}{x-1} + 1}{\frac{x+1}{x-1} - 1} \right)^k
			\right]_{k=0}^n = 2^n \ket{x}.
		\end{split}
	\end{equation}
	
\end{proof}

\begin{lemma}\label{remark-symetric}
	Using the notation presented above, we have the following equation $MD = 
	(MD)^\top $.
\end{lemma}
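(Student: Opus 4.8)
The plan is to reduce this matrix identity to a scalar symmetry and then make that symmetry manifest by rewriting the convolution $s_n(k,m)$ in terms of signed multinomial coefficients.

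First I would spell out what $MD=(MD)^\top$ means entrywise. Since $D=\sum_{m=0}^{n}\frac{1}{\binom{n}{m}}\proj{m}$ is diagonal, we have $(MD)_{k,m}=s_n(k,m)/\binom{n}{m}$, so the claim is equivalent to the identity
\[
	\binom{n}{k}\, s_n(k,m)=\binom{n}{m}\, s_n(m,k)\qquad\text{for all }0\le k,m\le n .
\]

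Next I would carry out the inner summation in the definition of $s_n(k,m)$ against the Kronecker delta, setting $j=m-i$, to obtain
\[
	s_n(k,m)=\sum_{i}\binom{k}{i}\binom{n-k}{m-i}(-1)^{\,n-k-m+i},
\]
where $i$ runs over $\max(0,k+m-n)\le i\le\min(k,m)$ (for $i$ outside this range one of the two binomial factors vanishes, so the bound may be left implicit). Multiplying by $\binom{n}{k}$ and collapsing the product of binomials via
\[
	\binom{n}{k}\binom{k}{i}\binom{n-k}{m-i}=\frac{n!}{i!\,(k-i)!\,(m-i)!\,(n-k-m+i)!},
\]
I would then observe that every ingredient of the resulting expression for $\binom{n}{k}s_n(k,m)$ --- the multinomial coefficient just displayed, the sign $(-1)^{n-k-m+i}$, and the admissible range of the summation index $i$ --- is symmetric under the interchange $k\leftrightarrow m$. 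Hence the whole sum is invariant under $k\leftrightarrow m$, which is exactly the identity $\binom{n}{k}s_n(k,m)=\binom{n}{m}s_n(m,k)$, and therefore $MD=(MD)^\top$.

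There is no genuine obstacle here: the computation is short once one spots the multinomial symmetry, and the only point requiring a little care is the bookkeeping of the summation range, so that the $k\leftrightarrow m$ swap is literally an equality of sums rather than merely of individual summands. Alternatively, one could run the same argument from the generating-function description $s_n(k,m)=[x^m](1+x)^{k}(x-1)^{n-k}$ already used in the proof of Lemma~\ref{remark-squere}, but the multinomial route is the most direct.
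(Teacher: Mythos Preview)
Your proof is correct and follows essentially the same route as the paper: both reduce the matrix identity to the entrywise statement, eliminate the Kronecker delta by setting $j=m-i$, and then rewrite the product of binomials as a multinomial-type expression that is manifestly symmetric under $k\leftrightarrow m$. The only cosmetic difference is that you multiply through by $\binom{n}{k}$ to obtain $\frac{n!}{i!(k-i)!(m-i)!(n-k-m+i)!}$, whereas the paper divides by $\binom{n}{m}$ and arrives at $\frac{k!\,m!\,(n-k)!\,(n-m)!}{n!\,i!\,(k-i)!\,(m-i)!\,(n-k-m+i)!}$; these differ by the symmetric factor $\binom{n}{k}\binom{n}{m}$, so the argument is the same.
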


\begin{proof}
	We will show that $\bra{k} MD \ket{m} = \bra{m} MD \ket{k}$ for any $m,k =
	0,\ldots,n$. W.l.o.g. we can assume that $k < m$. On the one hand, it holds
	that
	\begin{equation}\label{proof-MD-1}
		\begin{split}
			\bra{k} MD \ket{m} &= \frac{s_n(k,m)}{\binom{n}{m}} = 
			\sum_{\substack{
					i = 0,\ldots,k \\
					j = 0, \ldots, n - k \\
					i + j = m}}
			\frac{(-1)^{n-k-j} \binom{k}{i} \binom{n-k}{j}}{\binom{n}{m}}= 
			\sum_{i = \max \left(0, m
				+ k - n\right) }^k
			\frac{(-1)^{n-k-m+i} \binom{k}{i} \binom{n-k}{m - 
					i}}{\binom{n}{m}}\\
			&= (-1)^{n-k-m} \sum_{i = \max \left(0, m
				+ k - n\right) }^k (-1)^{i}
			\frac{k! m! (n-k)! (n-m)!}{n!i!(k-i)!(m-i)!(n-k-m+i)!}.
		\end{split}
	\end{equation}
	On the other hand, we can calculate
	\begin{equation}\label{proof-MD-2}
		\begin{split}
			\bra{m} MD \ket{k}& = \frac{s_n(m,k)}{\binom{n}{k}} = 
			\sum_{\substack{
					i = 0,\ldots,m \\
					j = 0, \ldots, n - m \\
					i + j = k}}
			\frac{(-1)^{n-m-j} \binom{m}{i} \binom{n-m}{j}}{\binom{n}{k}}= 
			\sum_{i = \max \left(0, m
				+ k - n\right) }^k
			\frac{(-1)^{n-k-m+i} \binom{m}{i} \binom{n-m}{k - 
					i}}{\binom{n}{k}}\\
			&= (-1)^{n-k-m}\sum_{i = \max \left(0, m
				+ k - n\right) }^k (-1)^{i}
			\frac{k! m! (n-k)! (n-m)!}{n!i!(k-i)!(m-i)!(n-k-m+i)!},
		\end{split}
	\end{equation}
	which gives us the desired equality and  completes the proof.
\end{proof}

\begin{lemma}\label{measurement-correctly}
	The operator $R$ defined in Eq.~\eqref{effect-rr} satisfies $0 \le R \le 
	\Id_{2^{n+2}}$ and 
	therefore $\RR = \{ R, \Id - R\}$ is a valid POVM. 
\end{lemma}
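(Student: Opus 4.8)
The plan is to reduce everything to an orthogonality statement. Since $R=\sum_{k=0}^{n}\proj{R_k}$ is a sum of rank-one projectors, $R\ge 0$ is immediate, so the only content is $R\le\Id_{2^{n+2}}$. I would prove the stronger fact that the unit vectors $\{\ket{R_k}\}_{k=0}^{n}$ are pairwise orthogonal; then $R$ is an orthogonal projection and $0\le R\le\Id_{2^{n+2}}$ follows at once, which also makes $\RR=\{R,\Id-R\}$ a POVM. Because $\ket{R_k}=\ketv{M_k}/\|M_k\|_2$ and $\langle\!\langle M_k\,|\,M_l\rangle\!\rangle=\tr(M_k^\dagger M_l)$, this amounts to showing that $\tr(M_k^\dagger M_l)=0$ whenever $k\neq l$.

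The computation of $\tr(M_k^\dagger M_l)$ I would do by expanding $M_k,M_l$ in the orthonormal Dicke basis $\{\ket{D_m^{n+1}}\}_{m=0}^{n+1}$ appearing in Eq.~\eqref{effect-rr}. The cross terms drop out and one gets $\tr(M_k^\dagger M_l)=\sum_{m=0}^{n+1}\tfrac{1}{\binom{n+1}{m}}\bigl(s_n(k,n-m)s_n(l,n-m)+s_n(k,n+1-m)s_n(l,n+1-m)\bigr)$. Reindexing the two sums by $a=n-m$ and $a=n+1-m$ respectively and using $s_n(k,a)=0$ unless $0\le a\le n$, this collapses to $\sum_{a=0}^{n}s_n(k,a)s_n(l,a)\bigl(\tfrac{1}{\binom{n+1}{a}}+\tfrac{1}{\binom{n+1}{a+1}}\bigr)$. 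The elementary identity $\tfrac{1}{\binom{n+1}{a}}+\tfrac{1}{\binom{n+1}{a+1}}=\tfrac{n+2}{n+1}\cdot\tfrac{1}{\binom{n}{a}}$ then gives $\tr(M_k^\dagger M_l)=\tfrac{n+2}{n+1}\,(M D M^\top)_{kl}$, where $M=[s_n(k,m)]_{k,m=0}^{n}$ and $D=\sum_{m=0}^{n}\tfrac{1}{\binom{n}{m}}\proj{m}$ are exactly the matrices introduced just before Lemma~\ref{remark-squere}.

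Now I would invoke the two structural lemmas already in hand: $M^2=2^n\Id_{n+1}$ (Lemma~\ref{remark-squere}) and $MD=(MD)^\top=DM^\top$ (Lemma~\ref{remark-symetric}). Combining them, $M D M^\top=M(MD)=M^2D=2^nD$, which is diagonal. Hence $\tr(M_k^\dagger M_l)=\tfrac{(n+2)2^n}{n+1}\cdot\tfrac{\delta_{kl}}{\binom{n}{k}}$: it vanishes for $k\neq l$ and equals $\|M_k\|_2^2$ for $k=l$. Therefore $\braket{R_k}{R_l}=\delta_{kl}$, $R$ is an orthogonal projection, and $0\le R\le\Id_{2^{n+2}}$.

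The Dicke-basis bookkeeping and the binomial identity are routine; the one step to handle carefully is matching the offsets in the two reindexed sums ($a$ versus $a+1$) so that the binomial identity applies cleanly, together with keeping the vectorization/subsystem ordering straight. There is no genuine obstacle once Lemmas~\ref{remark-squere} and~\ref{remark-symetric} are available — the decisive observation is simply that $MDM^\top$ is diagonal, which is what forces the $\ket{R_k}$ to be orthonormal.
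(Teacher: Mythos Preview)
Your proposal is correct and follows essentially the same argument as the paper: both compute $\tr(M_k^\dagger M_l)=\frac{n+2}{n+1}(MDM^\top)_{kl}$ via the Dicke-basis expansion and the binomial identity, and then use Lemmas~\ref{remark-squere} and~\ref{remark-symetric} to conclude that $MDM^\top=2^nD$ is diagonal, forcing $\braket{R_k}{R_l}=\delta_{kl}$. The only cosmetic difference is that the paper writes $MDM^\top=(MD)^\top M^\top=D(M^2)^\top$, whereas you write $MDM^\top=M(MD)=M^2D$; both are the same manipulation using $MD=DM^\top$.
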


\begin{proof}
	Let us fix $N_0 \in \N$ and take $n= N_0-1$. Let us consider a matrix $X 
	\coloneqq \frac{n+2}{n+1}
	MDM^\top$. On the one hand, by using Lemma~\ref{remark-squere} and 
	Lemma~\ref{remark-symetric},  we
	get
	\begin{equation}
		X = \frac{n+2}{n+1 } \left(MD\right)^\top M^\top = \frac{n+2}{n+1}
		D(M^2)^\top = \frac{n+2}{n+1} 2^n D.
	\end{equation}
	On the other hand, we have
	\begin{equation}
		\begin{split}
			\tr\left(M_k^\dagger M_{k'}\right)  &= \sum_{m=0}^n \frac{s_n(k,m)
				s_n(k',m)}{{n+1 \choose n-m}}  +  \sum_{m=0}^n 
			\frac{s_n(k,m)s_n(k',m)}{{n+1
					\choose	n+1-m}} \\&= \sum_{m=0}^n 	s_n(k,m) s_n(k',m) 
			\left[ 
			\frac{1}{{n+1
					\choose n-m}} + \frac{1}{{n+1 	\choose n-m+1}} \right]  
			\\&=
			\frac{n+2}{n+1} \sum_{m=0}^n \frac{s_n(k,m)
				s_n(k',m)}{{n \choose m}}  = \bra{k}X \ket{k'}.
		\end{split}
	\end{equation}
	Therefore, for all $k \not = k'$ we get $ \tr\left(M_k^\dagger 
	M_{k'}\right) = 
	0 $. According to the definition Eq.~\eqref{effect-rr}, we get 
	$\braket{R_k}{R_{k'}}=\delta_{k,k'}$, which gives us $0 \le R
	\le \Id_{2^{n+2}}$.
	
\end{proof}

\begin{lemma}\label{form-of-q}
	
	Let us fix $N_0 \in \N$. The
	approximation $\QQ_U = \{ Q_{U,0}, \Id_2 - Q_{U,0}\}$ of the von Neumann
	measurement $\PP_{U}  $ obtained in the pretty good learning scheme is of
	the form
	\begin{equation}
		\begin{split}
			Q_{U,0} = \frac{N_0}{N_0+1} P_{U,0}.
		\end{split}
	\end{equation}
\end{lemma}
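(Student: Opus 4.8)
The plan is to unwind the defining relation $\tr(\rho Q_{U,0}) = \tr\bigl((\rho\otimes\overline{P_{U,0}}^{\otimes N_0})R\bigr)$, which must hold for every $\rho\in\Omega(\HH_2)$, into a sum over the $n+1$ rank-one terms of $R=\sum_{k=0}^n\proj{R_k}$, rewrite each term as an explicit scalar multiple of $\tr(\rho P_{U,0})$, and then collapse the sum of scalars with the binomial theorem. Since $\rho$ is arbitrary, this forces $Q_{U,0}=\bigl(\sum_k c_k\bigr)P_{U,0}$ and reduces the lemma to evaluating $\sum_k c_k$.

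First I would use the vectorization identity $(A\otimes B)\ketv{M}=\ketv{AMB^\top}$ with $A=\rho$ on the input register and $B=\overline{P_{U,0}}^{\otimes N_0}$ on the memory register, together with the elementary fact $\bigl(\overline{P_{U,0}}^{\otimes N_0}\bigr)^\top=P_{U,0}^{\otimes N_0}$ (because $\overline{X}^{\top}=X^{\dagger}$ and $P_{U,0}$ is Hermitian). Writing $\ket{R_k}=\ketv{M_k}/\|M_k\|_2$, this turns $\brav{M_k}(\rho\otimes\overline{P_{U,0}}^{\otimes N_0})\ketv{M_k}$ into $\tr\bigl(M_k^\dagger\rho\,M_kP_{U,0}^{\otimes N_0}\bigr)$, so that $\tr(\rho Q_{U,0})=\sum_{k=0}^n\|M_k\|_2^{-2}\,\tr\bigl(M_k^\dagger\rho\,M_kP_{U,0}^{\otimes N_0}\bigr)$.

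Next, set $\ket{u}\coloneqq U\ket{0}$ with components $a=\braket{0}{u}$, $b=\braket{1}{u}$, so that $P_{U,0}^{\otimes N_0}=\proj{u}^{\otimes(n+1)}$. Lemma~\ref{remark-ab} applied to $\ket{x}=\ket{u}$ gives $M_k\ket{u}^{\otimes(n+1)}=(a+b)^k(a-b)^{n-k}\ket{u}$, hence $\tr\bigl(M_k^\dagger\rho\,M_kP_{U,0}^{\otimes N_0}\bigr)=\bra{u}^{\otimes(n+1)}M_k^\dagger\rho\,M_k\ket{u}^{\otimes(n+1)}=|a+b|^{2k}|a-b|^{2(n-k)}\tr(\rho P_{U,0})$, that is, $c_k=\|M_k\|_2^{-2}|a+b|^{2k}|a-b|^{2(n-k)}$. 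Finally I would extract $\|M_k\|_2^2=\tr(M_k^\dagger M_k)$ from the computation already performed inside the proof of Lemma~\ref{measurement-correctly}: there it is shown that $\tr(M_k^\dagger M_{k'})=\bra{k}X\ket{k'}$ with $X=\tfrac{n+2}{n+1}2^nD$, whence $\|M_k\|_2^2=\tfrac{n+2}{n+1}\tfrac{2^n}{\binom{n}{k}}$. Using $|a|^2+|b|^2=\|u\|^2=1$, so that $|a+b|^2+|a-b|^2=2$, the binomial theorem then gives $\sum_{k=0}^n c_k=\tfrac{n+1}{(n+2)2^n}\sum_{k=0}^n\binom{n}{k}|a+b|^{2k}|a-b|^{2(n-k)}=\tfrac{n+1}{(n+2)2^n}(|a+b|^2+|a-b|^2)^n=\tfrac{n+1}{n+2}=\tfrac{N_0}{N_0+1}$, and therefore $Q_{U,0}=\tfrac{N_0}{N_0+1}P_{U,0}$.

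The main obstacle is the bookkeeping in the first step: one has to track the conjugation and transposition carefully so that the memory operator reappears as $P_{U,0}^{\otimes N_0}$ (and not $\overline{P_{U,0}}^{\otimes N_0}$) after vectorization — it is precisely this, combined with the action of $M_k$ from Lemma~\ref{remark-ab}, that lets each term factor through the single scalar $\tr(\rho P_{U,0})$ with a modulus-square coefficient and makes the binomial sum telescope. A minor point is that the value $\|M_k\|_2^2=\tfrac{n+2}{n+1}\tfrac{2^n}{\binom{n}{k}}$ is not isolated as its own statement but sits inside the proof of Lemma~\ref{measurement-correctly}, so it should be quoted explicitly.
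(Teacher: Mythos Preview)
Your proposal is correct and follows essentially the same route as the paper: both arguments invoke Lemma~\ref{remark-ab} to evaluate $M_k\ket{u}^{\otimes n+1}$, both quote the norm value $\|M_k\|_2^2=\tfrac{n+2}{n+1}\tfrac{2^n}{\binom{n}{k}}$ from inside the proof of Lemma~\ref{measurement-correctly}, and both finish with the binomial identity $\sum_k\binom{n}{k}|a+b|^{2k}|a-b|^{2(n-k)}=2^n$. The only cosmetic difference is that the paper computes $Q_{U,0}$ directly as the operator $\tr_{\XX}\bigl((\Id_2\otimes\overline{P_{U,0}}^{\otimes n+1})R\bigr)=(\Id_2\otimes\overline{\bra{x}}^{\otimes n+1})R(\Id_2\otimes\overline{\ket{x}}^{\otimes n+1})$, while you pair against an arbitrary $\rho$ first; the underlying calculation is the same.
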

\begin{proof}
	Given a unitary matrix $U$ we take
	$P_{U,0} = \proj{x}$ for some unit vector $\ket{x} \in \HH_2$. Let us
	decompose
	the $(n+2)$-qubit space in the following way $\HH_{2^{n+2}} = \ZZ \otimes
	\XX$, where $\ZZ = \HH_2$ and $\XX = \HH_{2^{n+1}}$. In the proof of
	Lemma~\ref{measurement-correctly} we defined the matrix $X =
	\frac{n+2}{n+1} MDM^\top$ and showed that $X = \frac{n+2}{n+1} 2^n D$, and
	$\tr\left(M_k^\dagger M_{k'}\right)  = \bra{k}X \ket{k'}$. Therefore, for
	any $k= 0,\ldots,n$ we have $ \| M_k \|_2^2 = \frac{n+2}{n+1}
	\frac{2^n}{{n \choose k}} $. Due to this fact and by
	Lemma~\ref{remark-ab}, we may express the effect $Q_{U, 0}$ as
	\begin{equation}
		\begin{split}
			Q_{U,0} &=	\tr_{\XX} \left( \left( \Id_2 \otimes
			\overline{P_{U,0}}^{\otimes n + 1} \right) R \right) = \left(\Id_2
			\otimes\overline{\bra{x}}^{\otimes n+1} \right) R \left( \Id_2
			\otimes\overline{\ket{x}}^{\otimes n+1} \right)  = \sum_{k=0}^{n}
			\frac{1}{\| M_k \|_2^2} M_k
			\proj{x}^{\otimes n+1} M_k^\dagger \\ &= \sum_{k=0}^n
			\frac{1}{\| M_k \|_2^2} |a+b|^{2k}
			|a-b|^{2(n-k)} \ketbra{x}{x} = \frac{n+1}{n+2} \sum_{k=0}^n 
			\frac{{n\choose
					k}}{2^n} |a+b|^{2k} |a-b|^{2(n-k)} \ketbra{x}{x} \\ &= 
			\frac{n+1}{n+2}
			\frac{(|a+b|^2+|a-b|^2)^n}{2^n} \ketbra{x}{x} = \frac{n+1}{n+2}
			\ketbra{x}{x},
		\end{split}
	\end{equation}
	which completes the proof.
\end{proof}

\end{document}